\newtheorem{theorem}{Theorem}
\newtheorem{definition}[theorem]{Definition}
\newtheorem{proposition}[theorem]{Proposition}
\newtheorem{remark}[theorem]{Remark}
\newcommand{\norm}[1]{\left\Vert#1\right\Vert}
\newcommand{\inpd}[2]{\left\langle #1, #2 \right\rangle}
\newcommand{\abs}[1]{\left\vert#1\right\vert}
\newcommand{\Real}{\mathbb R}
\newcommand{\eps}{\varepsilon}
\newcommand{\Ker}{\mbox{Ker}}
\newcommand{\img}{\mbox{Img}}
\DeclareMathOperator{\spn}{span}
\renewcommand{\vec}[1]{\boldsymbol{#1}}
\newcommand{\mfd}{\mathcal{M}}
\newcommand{\tang}{{T}}
\newcommand{\bX}{\vec{X}}
\newcommand{\bx}{\vec{x}}
\newcommand{\by}{\vec{y}}
\newcommand{\bW}{\vec{W}}
\newcommand{\ud}{\vec{u}}
\newcommand{\nd}{\vec{n}}
\newcommand{\bd}{\vec{b}}
\newcommand{\phid}{\vec{\phi}}
\newcommand{\varphid}{\vec{\varphi}}
\newcommand{\pd}{\vec{p}}
\newcommand{\transpose}{\textsf{T}} 
\newcommand{\T}{\ensuremath{{\rm T}}}
\renewcommand{\d}{\ensuremath{\,\mathrm{d}}}
\newcommand{\eye}{{I}}
\newcommand{\FW}{{ Freidlin-Wentzell} }
\newcommand{\leqref}[1]{Eq.~\eqref{#1}}
\g@addto@macro{\endabstract}{\@setabstract}
\newcommand{\authorfootnotes}{\renewcommand\thefootnote{\@fnsymbol\c@footnote}}%
\begin{document}

%
%


\title{ }

\begin{center}
  \LARGE 
Finding Transition Pathways on Manifolds \par \bigskip

  \normalsize
  \authorfootnotes
  Tiejun Li \footnote{email: {\it tieli@pku.edu.cn}. Mailing address: School of Mathematical Sciences, Peking University, Beijing 100871.}\textsuperscript{1},  Xiaoguang Li\footnote{email: {\it lxg1023@pku.edu.cn}. Mailing address: School of Mathematical Sciences, Peking University, Beijing 100871.}\textsuperscript{1},
  Xiang Zhou\footnote{email: {\it xizhou@cityu.edu.hk}. Mailing address: 
  Department of mathematics, City University of Hong Kong, Tat Chee Ave, Kowloon, Hong Kong.}\textsuperscript{2} \par \bigskip

  \textsuperscript{1}LMAM and School of Mathematical Sciences,  Peking University,  China \par
  \textsuperscript{2}Department of Mathematics,
        City  University of Hong Kong, Hong Kong\par \bigskip

  \today
\end{center}



\date{}

\maketitle

\section*{Abstract}
We consider  noise-induced transition paths in  randomly perturbed dynamical systems  on  a smooth  manifold.   The classical \FW large deviation theory in Euclidean spaces is  generalized  and 
new forms of action functionals  are derived in the spaces of functions and the space of curves to accommodate the intrinsic constraints  associated with the manifold. 
Numerical methods are proposed to compute  the minimum action paths for the systems  with constraints.
The examples of conformational transition paths for a single and double rod molecules arising in polymer science are numerically investigated. 


\section{Introduction}

A large number of interesting behaviours of stochastically perturbed dynamical systems are
closely related to rare but important transition events between metastable
states. Such rare events play a major role in chemical reactions,
conformational changes of biomolecules,  nucleation events and the
like. 
Theoretical understanding of  such transition events and transition
paths has attracted a lot of attentions for many years \cite{Kramers, FW1998}. 
The   model under consideration  is the following (It\^{o}) stochastic differential equations (SDEs) in $\Real^n$ with  small noise amplitude 
\begin{equation}
\d X_t  = b(X_t)  \d t + \sqrt{\eps} \sigma(X_t)  \d W_t.
\end{equation}
The drift term $b(x)$  could be the gradient form of a potential energy function or  have a  rather  general  form.  
The diffusion matrix $\sigma(x)$ is assumed  uniformly  non-degenerate.   $b$ and $\sigma$ satisfy  the  regular smoothness conditions
that  are  Lipschitz continuous and bounded.

According to the \FW large deviation theory \cite{FW1998}, in the asymptotics of vanishing noise $\eps\downarrow 0$,
the most probable transition pathway  is the minimizer of   the \FW action functional $S$, being
\begin{equation}\label{eqn:S}
S_T[\varphi] =  \frac12  \int_0^T \|{\sigma^{-1}(\varphi) (\dot{\varphi}-b(\varphi)})\|^2_2 \d t.
\end{equation}
Based on this principle of least action, a few numerical methods,  such as the Minimum Action Method (MAM) and its adaptive version \cite{weinan-MAM2004,aMAM2008}, have been proposed and developed for a fixed time interval $[0, T]$ of interest. 
Another different  formulation of the \FW theory,  based on  Maupertuis' principle \cite{Landau-Lifshitz-Mechanics}, is the geometric Minimum Action Method (gMAM) on the space of curves \cite{Heymann2006}.   The path given by the gMAM 
can be viewed as the minimum action path of the original \FW action  for an optimal $T$.
In the special case that   $b(x)=-\nabla V(x)$ and $\sigma(x)\equiv 1$,  the minimum action path  is minimum energy path
and the string method \cite{String2002} is applicable to  identify  this path. 

In practical applications, the dynamics may be subject to one or more constraints, 
such as the constant length of rigid   molecules, the conservation of mass, etc. 
These constraints limit the system to live in a particular manifold $\mfd$, decided by all the constraints, rather than in Euclidean space $\Real^n$.
Even when the stochastic perturbation is applied, the resulting stochastic system 
still has to satisfy these physical  constraints.  Thus, one needs to  model the {\it stochastic} system as 
an SDE  on a manifold  $\mfd$ rather than a deterministic flow $\dot{x}=b(x)$ on $\mfd$ perturbed by  the noise  freely in $\Real^n$. The problem in the latter case is that the perturbed stochastic system does not conserve the quantities associated with $\mfd$.  This subtlety actually implies a different form of the resulting action functional,
although in the special case of isotropic noise, i.e., $\sigma(x)\equiv1$, the action functionals from the two formulations are equivalent.
In this reversible case, the  modification  of the original string method by directly applying the constraints  to the path 
is applicable  \cite{ConstString}. It is also justified below in our paper, that 
the straightforward use of \eqref{eqn:S} by solving the constraint minimization problem  $\inf_{\varphi\in \mfd} S_T[\varphi]$
is applicable when the diffusion tensor $\sigma$ is isotropic.

We present here  a rigorous derivation  of action functionals  for general (non-degenerate) diffusion tensor $\sigma(x)$ 
by  starting from the SDE  on the manifold  $\mfd$ with vanishing noise.  Although  an abstract analogy  of  \FW action functional
can be readily accessed  (Section \ref{SDEmfd}),  the practical application calls for the expressions when the underlying  SDE 
is written on $\mfd$ as   embedded in the Euclidean  space $\Real^n$.
This setting up in particular  caters  for the case under study:  $\mfd=\{x\in \Real^n:  c_i(x)=0\}$ where $c_i$ are constraints of the system,
where a projection operator $\Pi$ from $\Real^n$ to $\tang \mfd$,
the tangent space of $\mfd$,  can be introduced.
By handling the degeneracy issue of the projected diffusion noise, we derive the  new forms of  action functionals. 
Our analysis  suggests that 
the resulting forms of action functional on the manifold may differ from  the traditional one like \leqref{eqn:S}.
The difference comes from the discrepancy of the metric:  
The diffusion induces a metric  $a(x)=\sigma(x)\sigma(x)^\transpose$, and the 
minimum action path on $\mfd$ could be viewed as a geodesic  (at least for the pure diffusion case when $b=0$) on $\mfd$ equipped with the metric $a$,
but  the projection $\Pi$  uses the $L^2$-norm of  $\Real^n$.

The paper is organized as follows.  We first discuss the stochastic differential equation 
in the Stratonovich sense on the manifold and the abstract form of the \FW action functional on the manifold in Section 2.
In Section 3,  we  consider the manifold embedded in an Euclidean space and introduce the local projection operator 
to represent  the SDE with coordinates formulation.  The formula of the action functional are then derived on
the space of  functions of time  and the space of  geometric curves.  Section 4 is devoted to the 
numerical methods --- the constrained minimum action method.  The applications to
liquid crystal models are presented in Section 5, where we study the conformational transitions for rod molecules  on $\mathbb{S}^2$ (unit sphere) and $\mathbb{S}^2\times \mathbb{S}^2$.  In Section 6, we present some outlook for other types of transition paths with constraints beyond our current approach. Finally we make the summary.

\section{SDE and large deviation principle  on manifolds}
\label{SDEmfd}

The SDE on the manifold is most conveniently written in  the Stratonovich sense  \cite{Book:Hsu}.
We consider 
\begin{equation}\label{eqn:SDE-Manifold}
\d X_{t}=b(X_{t})\d t +\sqrt{\eps}\sum_{k=1}^{L}\sigma_{k}(X_{t}) \circ \d W_t^k
\end{equation}
on a compact differentiable $d$-dimensional manifold $\mfd$ without boundary. Here $X\in \mfd$, the drift and diffusion functions $b,\{\sigma_{k}\}_{k=1}^{L}\in \tang\mfd$, the tangent bundle of $\mfd$, and $\{W^{k}\}_{k=1}^{L}$ are independent Wiener processes on $\Real$. 
We assume the non-degeneracy condition for diffusion, i.e. 
$$\dim \mbox{span}\{\sigma_{k}(x)\}_{k=1}^{L} = d$$
for any $x\in \mfd$.

From the large deviation theory on manifolds,
under certain regularity conditions on $b$ and $\sigma_k$,  we have the rate functional (or action functional) $S[\varphi]$ as $\eps$ goes to $0$
\begin{equation}\label{eqn:Smfd}
S_{T}[\phi] =   \int_0^T L(\phi,\dot\phi)\d t,
\end{equation}
where $\phi\in C([0,T];\mfd)$ is absolutely continuous, and $\dot\phi$ means the derivative with respect to the time $t$. The Lagrangian $L$ is given by the Legendre transformation of the Hamiltonian $H$ as
\begin{equation}\label{eqn:LagHamil-Legendre}
L(x,y) = \sup_{p\in T^*\mfd} \left \{ \inpd{p}{y} -H(x,p) \right \},
\end{equation}
where $T^*\mfd$ is the cotangent bundle of $\mfd$, $y\in T\mfd$, and $\inpd{\cdot}{\cdot}$ is the dual product between the spaces $T^*\mfd$ and $T\mfd$.

For our system \eqref{eqn:SDE-Manifold} the Hamiltonian $H$ has the form
\begin{equation*}
H(x,p) = \inpd{p}{b(x)}  +  \frac12 \sum_{k=1}^{L} \left \vert \inpd{p}{\sigma_{k}(x)} \right \vert^{2}.
\end{equation*}
Since $H$ is just a quadratic form of $p$, we have the equation for the critical point of \eqref{eqn:LagHamil-Legendre}
\begin{equation}\label{eqn:CriticalPoint}
\frac{\partial H}{\partial p} = b(x)+ \sum_{k=1}^{L} \inpd{p}{\sigma_{k}(x)} \sigma_{k}(x) = y. 
\end{equation}
Note that the type $(0,2)$ covariant symmetric tensor field
$$
\sum_{k=1}^{L} \sigma_{k}\otimes \sigma_{k}\in T^{2}_{0}\mfd
$$
can also be viewed as a mapping
$$
a := \sum_{k=1}^{L} \sigma_{k}\otimes \sigma_{k}: T^*\mfd\longrightarrow T\mfd
$$
by fixing its first or second argument \cite{Book:Bishop}. From the non-degeneracy condition, we have that the mapping $a$ is bijective, thus its inverse $a^{-1}: T\mfd\rightarrow T^{*}\mfd$ is well-defined. Solving the equation \eqref{eqn:CriticalPoint} we get $p = a^{-1}(y-b(x))$ and thus
\begin{equation}\label{eqn:Labs}
L(x,y) = \frac12  \inpd{a^{-1}(y-b(x))}{y-b(x)}.
\end{equation}

To end this section, we comment on  the equivalence of the action functionals  in the large deviation theory
for Stratonovich SDEs and It\^{o} SDE. The equivalent  It\^{o} SDEs corresponding to \leqref{eqn:SDE-Manifold} has an additional term  with order $ O(\eps)$ besides the original $b$.
However, this additional $O(\eps)$ term  uniformly vanishes as $\eps\downarrow 0$ if $\sigma$ and its derivative are bounded.
Thus, by this fact, as shown in  \cite{FW1998},  this    SDE \eqref{eqn:SDE-Manifold}  shares the same large deviation result with 
the one written in the  It\^{o}  sense.

\section{Action Functional}
\label{sec:act}

The   abstract formulation Eqs.  \eqref{eqn:Smfd} and \eqref{eqn:Labs} in Section 2
of the action functional for SDE on the manifold  is applicable for many realistic problems. 
To be more tractable, we consider the manifold $\mfd$ embedded in the Euclidean space $\Real^n$, then  the SDE on $\mfd$ can be treated as  an SDE in $\Real^n$ and the standard \FW action functional can be explicitly calculated.
To represent the aforementioned  embedding, we need  introduce the projection operator for $\mfd$.
 

\subsection{Projection and its inverse}

Assume that $\mfd$ is  embedded in the Euclidean space $\Real^n$ ($n\ge d$). We  introduce $\Pi_x: \Real^{n}\rightarrow T_{x}\mfd\subset \Real^{n}$,  the  orthogonal  projection operator at point $x$  on the considered $d$  dimensional manifold $\mfd$. 
Given  a vector field  $b:\Real^{n}\rightarrow\Real^{n}$ and a uniformly nondegenerated diffusion tensor $\sigma:\Real^{n}\rightarrow \Real^{n\times m}$ $(m\ge n)$,
 the process of interest on $\mfd$ is of the following projection form
 \begin{equation}
\label{eqn:sde-mfd}
dX = \Pi  \Big ( b(X)\d t +   \sqrt{\eps}  \sigma (X) \circ \d W \Big),
\end{equation}
where $\Pi=\Pi_X$.  The subindex of the projection $\Pi$ is sometimes dropped out henceforth. For each sample trajectory in the probability space, if the initial condition $X_0=x\in \mfd$, the solution of \leqref{eqn:sde-mfd} $X_t$ is always on the manifold $\mfd$ for any time $t>0$.

The Hamiltonian for the SDE \eqref{eqn:sde-mfd} \cite{FW1998, Heymann2006} is 
\begin{equation*}
H(x,p) = \inpd{\Pi b(x)}{p} + \frac12 \left\|{\sigma(x) ^\transpose } \Pi p\right\|^2, ~~~ \forall x,p\in\Real^n,
\end{equation*}
where $\inpd{\cdot}{\cdot}$ and $\|\cdot\|$ are the inner product and $L^2$ norm of  $\Real^n$, respectively. We note that $\Pi^\transpose = \Pi$ since it is an orthogonal projection. The corresponding Lagrangian  is defined by the Legendre transformation  as follows
\begin{equation}
\label{eqn:Lag}
\begin{split}
L(x, y)& = \sup_{p\in \Real^n}  \Big( \inpd{y}{p} - H(x,p)  \Big)\\
&=  \sup_{p\in \Real^n} \left ( \inpd{y-\Pi b(x)}{p}  - \frac12 \|{\sigma(x) ^\transpose } \Pi p\|^2\right).
\end{split}
\end{equation}
For any $x$,   $\Pi: \Real^n\to \tang_x \mfd$ is the projection at this point $x$.
Then, each vector $p\in \Real^n$ can be written as $p=p_1+p_2$ where $p_1$ is in the image space, 
$ \img(\Pi)  $ and $p_2$ is in the kernel space, $\Ker(\Pi)$.
Since  $p_1=\Pi p$ and $\inpd{p_2}{\Pi b}=0$, then
\begin{equation*}
\begin{split}
L(x, y)&=  \sup_{p_1\in\img(\Pi), p_2\in \Ker(\Pi)} \left ( \inpd{y}{p_2} + \inpd{y-\Pi b(x)}{p_1} - \frac12 \|{\sigma(x) ^\transpose }  p_1\|^2\right).
\end{split}
\end{equation*}
If $y\notin \img(\Pi)$, then the    $\inpd{y}{p_2}$ term  on the right hands side of the preceding equation  can grow infinitely large and in this case,  $L=+\infty$. Hence, we only need to consider the case that 
 $y\in \img(\Pi)$ henceforth, 
 then it follows that \begin{equation*}
\begin{split}
L(x, y)&=  \sup_{p_1\in\img(\Pi)} \left (  
\inpd{y-\Pi b(x)}{p_1} - \frac12 \|{\sigma(x) ^\transpose }  p_1\|^2_2\right).
\end{split}
\end{equation*}
To solve the above  constrained convex  optimization problem, we seek its dual solution.
Define the optimization  Lagrangian function $\mathcal{L}: \Real^n\times \Real^n \to \Real$ as
\[
\mathcal{L}{(p,\lambda)} = \inpd{y-\Pi b(x)}{p} - \frac12 \|{\sigma(x) ^\transpose }  p\|^2_2
+ \inpd{\lambda}{ (\eye-\Pi)p}
\]
and the dual function
$g(\lambda)=\sup_{p\in \Real^n} \mathcal{L}(p,\lambda)$ for  $\lambda \in \Real^n$, 
where $I$ is the identity matrix.
The   optimal $p$ in  definition of  $g$ is 
\[p^*=a^{-1} (y-\Pi b +(I-\Pi)\lambda)\] 
where  $a^{-1}$ is the inverse of the positive-definite matrix 
$a(x)=\sigma(x)\sigma(x)^\transpose$ evaluated at $x$.
So  the dual function is 
\[g(\lambda)=\mathcal{L}(p^*,\lambda) =\frac12 
\left \|    y-\Pi b(x)  + (I-\Pi)\lambda \right \|^2_a .\]
Here the $a$-weighted norm $\|u\|_a$ associated with the positive definite matrix $a(x)$ 
is $\sqrt{u^\transpose a^{-1} u}$  evaluated at $x$.
Likewise, the $a$-weighted inner product is defined by $\inpd{u}{v}_a:=u^\transpose a^{-1}v$
for $u,v\in\Real^n$.

It is clear that for the quadratic optimization problem, the strong duality holds.
So, we have 
\[ L(x,y)=\inf_{\lambda} \ g(\lambda) =\inf_{\lambda} \ \frac12  \left\|   y-\Pi b(x)  + (I-\Pi)\lambda \right\|^2_a.\]
By introducing \[u= y-\Pi b(x)+ (I-\Pi)\lambda,\]
the infimum  of $g$ becomes
\[\inf_\lambda g(\lambda)=\min_{u\in \Real^n, \Pi u = y-\Pi b(x)}\frac12 \|u\|^2_a.\]
Therefore for $y \in \img(\Pi)$,
\begin{equation}
\label{eqn:L-1}
L(x,y) = \min_{u\in \Real^n, \Pi u = y-\Pi b(x)} \frac12 \|u\|_a^2.
\end{equation}
By the duality theory, we also have that \begin{equation}
\label{eqn:gm}
\frac{\partial L}{\partial y}=p^*=a^{-1}u^*
\end{equation}
where $u^*$
is the solution of the minimization problem \eqref{eqn:L-1}.

From the constraint   $\Pi u = y-\Pi b(x)$ for the minimization problem \eqref{eqn:L-1}, one may formally view  
$u$ as an element in the set $\Pi^{-1} (y-\Pi b(x))$  which has 
the minimal  $a$-weighted  norm.
To ease the presentation, we redefine $\Pi^{-1}$ as follows.
\begin{definition}
\label{def:inverse}
Let $\Pi$ be an orthogonal projection matrix in $\Real^n$
and $a$ be  a positive definite matrix. 
For any $ v \in \img(\Pi)$,  
we define 
$\Pi^{-1} v$ as the vector $u^*\in \Real^n$ such that $u^*$
solves 
$ \underset{\Pi u= v}\min \|u\|_a$.
\end{definition}

The above defined $\Pi^{-1} v$ for a given $v\in \img (\Pi)$ is unique since $a$ is not singular. 
We point out that $\Pi^{-1}$ is not exactly an inverse of $\Pi$ in strict sense because although
$\Pi \circ \Pi^{-1} $ is identity restricted on  the space $\img (\Pi)$, it is generally invalid that  $\Pi^{-1}  (\Pi v)=v$.
This generalized inverse $\Pi^{-1}: \img(\Pi) \to \Real^n$  depends on the metric induced by $a$.
If $a$ is a scalar matrix, then $\Pi^{-1}$ is identity restricted on $\img (\Pi)$. 
In the following derivation of the action functional for the SDE \eqref{eqn:sde-mfd},
 $a=\sigma\sigma^T$ is implicitly applied in the context where $\Pi^{-1}$ appears. 
Before our derivation,  we first point out some useful properties of $\Pi^{-1}$.

\begin{proposition}
\label{prop:aorth}
For every  $ w \in \Ker(\Pi)$ and   $v \in \img(\Pi)$,  we have
\begin{equation*}
\inpd{\Pi^{-1}v}{w}_a=0.
\end{equation*}
\end{proposition}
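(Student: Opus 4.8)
The plan is to obtain the identity directly from the first-order optimality condition for the minimization that defines $\Pi^{-1}$. First I would record that, because $\Pi$ is an orthogonal projection and $v\in\img(\Pi)$, we have $\Pi v=v$; hence the constraint set $\{u\in\Real^n:\Pi u=v\}$ in Definition~\ref{def:inverse} is exactly the affine subspace $v+\Ker(\Pi)$. Thus $u^*:=\Pi^{-1}v$ is characterized as the element of smallest $a$-weighted norm on an affine subspace whose direction space is precisely $\Ker(\Pi)$; since $a=\sigma\sigma^\transpose$ is positive definite, $\|\cdot\|_a$ is a genuine norm, so this minimizer exists and is unique, as already noted after Definition~\ref{def:inverse}.

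Next I would run a one-parameter variation. Fix an arbitrary $w\in\Ker(\Pi)$ and consider the perturbed vectors $u^*+tw$ for $t\in\Real$. Each is still feasible, since $\Pi(u^*+tw)=\Pi u^*+t\,\Pi w=v+0=v$, so the scalar map
\[
f(t):=\|u^*+tw\|_a^2=\|u^*\|_a^2+2t\,\inpd{u^*}{w}_a+t^2\|w\|_a^2
\]
is minimized over $\Real$ at $t=0$ by optimality of $u^*$. Therefore $f'(0)=2\,\inpd{u^*}{w}_a=0$, which is exactly the assertion $\inpd{\Pi^{-1}v}{w}_a=0$; since $w\in\Ker(\Pi)$ was arbitrary, this finishes the argument.

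I do not anticipate a real obstacle here: the whole content is the elementary fact that, in $\Real^n$ equipped with the inner product $\inpd{\cdot}{\cdot}_a$, the minimal-norm point of an affine subspace is $a$-orthogonal to that subspace's direction (the standard projection theorem), so the only step deserving a word of care is the reduction of the constraint set to $v+\Ker(\Pi)$ via $\Pi v=v$. Alternatively, one can read the claim off the Lagrange-multiplier structure already displayed above: with $u^*=y-\Pi b(x)+(\eye-\Pi)\lambda^*$ at the optimal multiplier $\lambda^*$, stationarity in $\lambda$ gives $(\eye-\Pi)p^*=0$, i.e. $p^*=a^{-1}u^*\in\img(\Pi)$, and $\img(\Pi)=\Ker(\Pi)^\perp$ then yields $\inpd{u^*}{w}_a=(p^*)^\transpose w=0$ for every $w\in\Ker(\Pi)$.
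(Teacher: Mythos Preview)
Your argument is correct and is essentially the same as the paper's: both run the one-parameter variation $u^*+tw$ with $w\in\Ker(\Pi)$, observe feasibility, and read off $\inpd{u^*}{w}_a=0$ from the vanishing first derivative at $t=0$. Your write-up is a bit more explicit (identifying the constraint set as $v+\Ker(\Pi)$ and expanding $f(t)$), and the Lagrange-multiplier alternative you sketch is a nice bonus, but the core idea is identical.
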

\begin{proof}
Let $u^*=\Pi^{-1}v$. Define   $u_\theta=u^*+ \theta w$, $\forall \theta\in\Real$.
Then $\Pi u_\theta = \Pi u^* = v$ for all $\theta$. So, the function 
$f(\theta)\triangleq \| u_\theta\|^2_a$ has a minimal value  $\|u^*\|^2_a$ at $\theta_{0}=0$.
Consequently, $f'(\theta_0)=0$ and $\inpd{u^*}{w}=0$ follows.
\end{proof}

\begin{proposition}
\label{cor:atri}
For any vector $v\in \img(\Pi)$, it is true that
\begin{align*}
\| \Pi^{-1}v\|_a^2 &=  \inpd{v}{\Pi^{-1}v}_a,  \\
\|v\|^2_a &=\|\Pi^{-1}v\|^2_a+\| {v-\Pi^{-1} v}\|^2_a.  
\end{align*}
\end{proposition}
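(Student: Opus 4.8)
The plan is to deduce both identities directly from Proposition~\ref{prop:aorth}, the only geometric input being that $\Pi^{-1}v$ lies in the affine fiber $\{u\in\Real^n : \Pi u = v\}$, so that the ``error vector'' $v-\Pi^{-1}v$ belongs to $\Ker(\Pi)$.

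First I would set $u^* := \Pi^{-1}v$ and record $\Pi u^* = v$, which holds because $\Pi\circ\Pi^{-1}$ restricts to the identity on $\img(\Pi)$ (as noted just after Definition~\ref{def:inverse}). Then the vector $w := v - u^* = -(I-\Pi)u^*$ satisfies $\Pi w = 0$, i.e.\ $w \in \Ker(\Pi)$. Applying Proposition~\ref{prop:aorth} with this $w$ and the given $v\in\img(\Pi)$ yields $\inpd{\Pi^{-1}v}{w}_a = \inpd{u^*}{w}_a = 0$; this single $a$-orthogonality relation is the engine for both claims.

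For the first identity I would write $v = u^* + w$ and expand in the $a$-inner product, using bilinearity and the orthogonality just established: $\inpd{v}{\Pi^{-1}v}_a = \inpd{u^*+w}{u^*}_a = \|u^*\|_a^2 + \inpd{w}{u^*}_a = \|\Pi^{-1}v\|_a^2$. For the second, I would again substitute $v = u^* + w$ and use bilinearity to obtain the Pythagorean identity $\|v\|_a^2 = \|u^*\|_a^2 + 2\inpd{u^*}{w}_a + \|w\|_a^2 = \|u^*\|_a^2 + \|w\|_a^2$, which is precisely $\|\Pi^{-1}v\|_a^2 + \|v-\Pi^{-1}v\|_a^2$.

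I do not expect a real obstacle here; the statement is a short consequence of the preceding proposition, and the proof is essentially two applications of bilinearity of $\inpd{\cdot}{\cdot}_a$ together with one orthogonality fact. The only point that merits care in the write-up is that the splitting $v = \Pi^{-1}v + (v-\Pi^{-1}v)$ is an orthogonal decomposition with respect to the $a$-metric (not the Euclidean one, and not the same as the Euclidean decomposition $p = p_1 + p_2$ used earlier in deriving \eqref{eqn:L-1}). If one wished to avoid invoking Proposition~\ref{prop:aorth}, the needed relation $\inpd{u^*}{w}_a = 0$ could equivalently be rederived on the spot from the variational characterization of $u^*$: since $u^* + \theta w$ stays in the fiber $\Pi u = v$ for all $\theta$ and $u^*$ minimizes $\|\cdot\|_a$ there, differentiating $\theta\mapsto\|u^*+\theta w\|_a^2$ at $\theta=0$ gives exactly $\inpd{u^*}{w}_a = 0$.
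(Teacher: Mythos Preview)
Your proof is correct and follows essentially the same approach as the paper: both arguments note that $v-\Pi^{-1}v\in\Ker(\Pi)$, invoke Proposition~\ref{prop:aorth} to get $a$-orthogonality of $\Pi^{-1}v$ and $v-\Pi^{-1}v$, and then expand by bilinearity. The only cosmetic difference is that the paper derives the second identity by first splitting $\|v\|_a^2=\inpd{v}{\Pi^{-1}v}_a+\inpd{v}{v-\Pi^{-1}v}_a$ and feeding in the first identity, whereas you expand $\|u^*+w\|_a^2$ directly; both routes are equivalent.
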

\begin{proof} The first equality is because $v-\Pi^{-1}v\in \Ker(\Pi)$ and Proposition \ref{prop:aorth}.
Then it follows that 
$\|v\|^2_a  = \inpd{v}{\Pi^{-1}v}+ \inpd{v}{v-\Pi^{-1} v}_a=\|\Pi^{-1}v\|^2_a+ \inpd{v}{v-\Pi^{-1} v}_a$.
Since $\inpd{\Pi^{-1}v}{v-\Pi^{-1} v}=0$ due to Proposition \ref{prop:aorth}, 
then $\|v\|^2_a  =  \|\Pi^{-1}v\|^2_a+\|v-\Pi^{-1} v\|^2_a$. 

\end{proof}

\begin{proposition}
\label{cor:xi}
If  $\dim \Ker(\Pi)=K$,
and $ \Ker(\Pi)=
 \spn\{\xi_k: k=1,\cdots,K\}$, then for any $v\in\img(\Pi)$, 
\[  \|v \|_a^2 = \|\Pi^{-1}v\|_a^2 + \|\hat{v}\|_M^2 ,\]
where $M=(M_{ij})=\inpd{\xi_i}{\xi_j}_a$, $i,j=1,2,\ldots, K$
and $\hat{v}= (\hat{v}_k)=\inpd{v}{\xi_k}_a$, $ k=1,\ldots,K$.
\end{proposition}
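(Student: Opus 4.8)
The plan is to bootstrap off Proposition \ref{cor:atri}: since that result already gives $\|v\|_a^2 = \|\Pi^{-1}v\|_a^2 + \|v-\Pi^{-1}v\|_a^2$ for $v\in\img(\Pi)$, the whole statement reduces to identifying the residual term $\|v-\Pi^{-1}v\|_a^2$ with the quadratic form $\hat v^\transpose M^{-1}\hat v =: \|\hat v\|_M^2$. So after invoking Proposition \ref{cor:atri} I would spend the rest of the argument on the equality $\|v-\Pi^{-1}v\|_a^2 = \|\hat v\|_M^2$.

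First I would observe that $\Pi(v-\Pi^{-1}v) = \Pi v - \Pi(\Pi^{-1}v) = v - v = 0$, using $v\in\img(\Pi)$ and the fact that $\Pi\circ\Pi^{-1}$ is the identity on $\img(\Pi)$. Hence $v-\Pi^{-1}v\in\Ker(\Pi)$, so it can be written as $v-\Pi^{-1}v = \sum_{k=1}^{K} c_k\xi_k$ for a unique coefficient vector $c=(c_k)\in\Real^K$, uniqueness coming from the fact that $\{\xi_k\}$ is a basis of $\Ker(\Pi)$. Taking the $a$-inner product with each $\xi_j$ gives $\inpd{v-\Pi^{-1}v}{\xi_j}_a = \sum_k M_{jk}c_k = (Mc)_j$. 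On the other hand, Proposition \ref{prop:aorth} applied with $w=\xi_j\in\Ker(\Pi)$ yields $\inpd{\Pi^{-1}v}{\xi_j}_a = 0$, so $\inpd{v-\Pi^{-1}v}{\xi_j}_a = \inpd{v}{\xi_j}_a = \hat v_j$. Therefore $Mc = \hat v$.

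Next I would note that $M$ is the Gram matrix of the linearly independent vectors $\xi_1,\dots,\xi_K$ relative to the genuine inner product $\inpd{\cdot}{\cdot}_a$ (genuine because $a$ is positive definite), hence symmetric positive definite and in particular invertible; thus $c = M^{-1}\hat v$. Finally $\|v-\Pi^{-1}v\|_a^2 = \inpd{\sum_k c_k\xi_k}{\sum_j c_j\xi_j}_a = c^\transpose M c = (M^{-1}\hat v)^\transpose M (M^{-1}\hat v) = \hat v^\transpose M^{-1}\hat v = \|\hat v\|_M^2$, and combining with the first step finishes the proof.

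There is no real obstacle here; the only point that needs a moment's care is the invertibility of $M$, which is exactly where the hypothesis that $\{\xi_k\}$ is a basis (rather than a redundant spanning set) is used, together with positive definiteness of $a$. If one allowed a redundant generating set, $M$ would be singular and $\|\hat v\|_M^2$ would have to be read through a pseudoinverse; the statement as phrased sidesteps this.
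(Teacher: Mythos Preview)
Your proof is correct and follows essentially the same route as the paper's: write $v-\Pi^{-1}v=\sum_k c_k\xi_k$, deduce $Mc=\hat v$, and compute $\|v-\Pi^{-1}v\|_a^2=c^\transpose Mc=\hat v^\transpose M^{-1}\hat v$, then invoke Proposition~\ref{cor:atri}. The only cosmetic difference is that the paper obtains $Mc=\hat v$ from the first-order optimality condition of the minimization defining $\Pi^{-1}$, whereas you obtain it by citing Proposition~\ref{prop:aorth}; these are the same fact phrased two ways.
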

\begin{proof}
Write $v-\Pi^{-1}v=\sum_{k}\lambda_k \xi_k$, then
these $\lambda_k$'s minimize $\|v-\sum_{k}\lambda_k \xi_k\|^2_a$.
So, $\lambda=(\lambda_1,\lambda_2,\ldots,\lambda_K)^\transpose$
satisfy $M\lambda = \hat{v}$.
Note that $\|v-\Pi^{-1}v\|_a^2=\|\sum_{k}\lambda_k \xi_k\|_a^2=\lambda^\transpose M \lambda=
\hat{v}^\transpose M^{-1}\hat{v}=\|\hat{v}\|_M^2$.
The conclusion is then immediate  from Proposition \ref{cor:atri}.
\end{proof}

\subsection{Freidlin-Wentzell action functional on $\mfd$}  
\label{sec:FW-mfd}

Given a starting point $A$ and an ending point  $B$ on $\mfd$ as well as a fixed time interval $[0,T]$,
we consider an absolute continuous path  $\phi$ on the manifold $\mfd$   connecting  the two points $\phi(0)=A$
and $\phi(T)=B$. 
By Eqs. \eqref{eqn:Smfd} and \eqref{eqn:L-1},  the action functional (or the rate function) for  
SDE \eqref{eqn:sde-mfd} in the vanishing noise limit is 
\begin{equation}
\label{eqn:rate-mfd}
S_T^\mfd [\phi]  = \begin{cases}
\underset{u}\inf \bigg \{  \frac12 \int_0^T  \|u\|_a^2 \d t:  \   \dot{\phi}-\Pi b(\phi)= \Pi  u, 
\phi(0)=A, \phi(T)=B \bigg \}, ~~\mbox{   if } \dot{\phi}\in \img \Pi_\phi\\
+\infty,  ~~~\mbox{  otherwise}.
\end{cases}
 \end{equation}
Here the projection $\Pi=\Pi_{\phi(t)}$.
Since the admissible path $\phi$  (i.e., $S^{\mfd}_{T}[\phi] < \infty$)
has its tangent $\dot{\phi}$ in the tangent space 
of the manifold $\mfd$,  the entire  admissible path $\phi$ is on $\mfd$. 
The form of the functional 
\leqref{eqn:rate-mfd} can also be formally derived  by  the contraction principle \cite{Varadhan1984,FW1998}.

 In \eqref{eqn:rate-mfd},   $u$  is a function of $t$ and is  equal to 
 $\Pi  ^{-1} ( \dot{\phi}-\Pi b )$ for $t\in [0,T]$ by Definition \ref{def:inverse}.
 Then the action functional \eqref{eqn:rate-mfd}  (for finite value) is  
\begin{equation}\label{S}
S_T^\mfd [\phi]  =   \frac12 \int_0^T   \left \| \Pi  ^{-1} ( \dot{\phi}-\Pi b )  \right \|_a^2 \d t
\end{equation}
  defined over the admissible set 
\begin{equation}
\label{set:A}
\begin{split}
\mathcal{A}=\{\phi \in C([0,T];\mfd):   
\phi(0)=A, \phi(T)=B, 
 ~\phi \mbox{ is absolutely continuous}\},
\end{split}
\end{equation}
which is equivalent to 
\begin{equation}
\label{set:A1}
\begin{split}
\mathcal{A}'=\{\phi \in C([0,T];\Real^n):  &~~\dot{\phi} \in  \img \Pi_\phi, 
\phi(0)=A, \phi(T)=B, \\& ~~\phi \mbox{ is absolutely continuous}\}.
\end{split}
\end{equation}
When the noise is isotropic, i.e., $\sigma(x)$ is a scalar matrix $\sigma I$, then $\|u\|_a = \|u\|/ \sigma^2$. In such a case,  $\Pi^{-1}$ is identity. Then 
the principle of least action is 
\[ \inf_{\phi\in \mathcal{A}}  S^1_T [\phi]\]
where 
\begin{equation}
\label{eqn:S1}
S^1_T [\phi]  \triangleq   
  \frac12 \int_0^T  
\left \| \dot{\phi}- \Pi b(\phi) \right \|_a^2 \d t.
 \end{equation}
 This $S^1_T$ is  the action functional corresponding to  an  SDE  similar to \leqref{eqn:sde-mfd}, but 
without the projection of the random forcing term, i.e.,
\[ \d X = \left(\Pi b(X) \right)\d t + \sqrt\eps \sigma(X) \d W,\]  
whose solution $X_t$ is not  on $\mfd$.  
In general, these two actions are different 
and satisfy 
\[ S^\mfd_T[\phi]\leq S^1_T[\phi]\]
by Proposition \ref{cor:xi}.

\begin{remark}
One naive   approach   might be to solve the  following
minimization problem 
\begin{equation}
\label{eqn:S0}
S^0 [\phi]  =  
\bigg \{  \frac12 \int_0^T  
\| \dot{\phi}- b(\phi) \|_a^2\d t: 
\Pi\dot{\phi} = \dot{\phi}, \ 
\phi(0)=A, \phi(T)=B 
\bigg \}.
 \end{equation}
Note that  $S^0-S^1= \frac12 \int_0^T \| b(\phi)-\Pi b(\phi)\|_a^2dt
$. Even for the gradient system and isotropic  constant diffusion 
$a=\sigma^2 I$, the solutions for $S^0$ and $S^1$ are different.
It is not correct to use  $S^0$ for the constrained SDE problem.
\end{remark}

If we furthermore assume that $b(x)=-\nabla V(x)$ and $\sigma(x)=\eye$, then the system is a gradient system on $\mfd$.
The variational problem $\inf_T \inf_{\phi\in \mathcal{A}}  S^1_T [\phi]$
has a solution which is the minimum energy path on $\mfd$ which satisfies that 
$\dot{\phi} \parallelslant \Pi b(\phi) $ and it follows that the extension of the string method 
works for this case by evolving each image on the string according to the flow $\Pi b(x)$ on $\mfd$ and
applying the  reparametrization  on $\mfd$.
So, our above derivation  justifies the algorithm \cite{ConstString} for this gradient case, but our form \eqref{S}
 is applicable to general cases.

 \subsection{Geometric action functional  on $\mfd$}
 
 The geometric formulation of the action function, developed in \cite{Heymann2006}, does not involve time explicitly
 and allow the variation of the time interval. 
If we consider  the original  formulation of \FW theory as analogy of  Lagrangian mechanics
for the trajectory of a particle, then
the geometric action functional in \cite{Heymann2006} correspond to   the Maupertuis' principle
($\S 44$, \cite{Landau-Lifshitz-Mechanics})
for the curve the particle travels. 

In the next, we consider the geometric action functional $\hat{S}$ for the SDE \eqref{eqn:sde-mfd}
on the manifold $\mfd$.
Suppose that a curve $\gamma$ on $\mfd$ is parametrized  as  $\gamma=\varphi(\alpha)$, with $\alpha\in[0,1]$, for instance, $\alpha$ being the arc length parameter.
Then the 
geometric action (also called {\it abbreviated action}  \cite{Landau-Lifshitz-Mechanics}) is 
the following  line integration along $\gamma$
\begin{equation*}
\hat{S}[\varphi]  = \int_\gamma 
\inpd{p}{ \d \varphi}
\end{equation*}
subject to the constraint $H(\varphi,p)=0$, where $p=\partial L/\partial y (\varphi, \dot{\varphi})$ is the generalized momentum,
$L$ is the Lagrangian defined in \eqref{eqn:Lag} and $\dot{\varphi}$ is the time derivative (velocity). 
By \leqref{eqn:gm}, this generalized momentum is 
 \begin{equation}\label{eqn:Gp}
  p = a^{-1}\Pi^{-1}  (\dot{\varphi}  - \Pi b(\varphi)).
  \end{equation}
So, $\hat{S}$ has the following expression,  
\[ 
\begin{split}
\hat{S}[\varphi] &=\int \inpd{ a^{-1} \Pi^{-1}  (\dot{\varphi}  - \Pi b(\varphi))} {\d \varphi}
\\
&=\int_0^1  \inpd{ \Pi^{-1}  ({\varphi}' \lambda   - \Pi b(\varphi))} {\varphi'}_a  \d \alpha
\end{split}
\]
where the scalar-valued function  $\lambda:=\d \alpha/\d t\in [0,+\infty]$ is the 
change of variable between the 
physical time $t$ and  the  arc length $\alpha$. Here
 $\dot{\varphi}= d\varphi/\d t$ is the time derivative
 and ${\varphi}' = d\varphi (\alpha)/d\alpha$
 is the tangent vector of the curve for the $\alpha$-parametrization.
To derive the expression of $\lambda$ in terms of $\varphi$ and $\varphi'$, we use 
the  condition  that the Hamilton along the trajectory is the constant zero \cite{Heymann2006,Landau-Lifshitz-Mechanics}.
Plugging in the generalized momentum given in  \leqref{eqn:Gp}, we solve $\lambda$  from the following
zero-valued Hamiltonian,
\[ H(\varphi, p)= \inpd{\Pi b(\varphi)}{ \Pi^{-1}  (\dot{\varphi}  - \Pi b(\varphi))}_a
 + \frac12
\left \| \Pi^{-1}  (\dot{\varphi}  - \Pi b(\varphi))\right \|_a^2 =0.\]
Since $\dot{\varphi} = \varphi' \lambda $, the above equation gives the result that
\[
\begin{split}
\frac12  \lambda^2 \|\Pi^{-1} \varphi'\|_a^2 + \lambda \inpd{\Pi b(\varphi)-\Pi^{-1}\Pi b(\varphi)}{\Pi^{-1}\varphi'}_a \\
+ \frac12 \left \| \Pi^{-1}\Pi b(\varphi) \right \|_a^2 - \inpd{\Pi b(\varphi)}{\Pi^{-1} \Pi b(\varphi)}_a =0.
\end{split}
\]
By the   Proposition \ref{prop:aorth}  and Proposition \ref{cor:atri} ,  it is further simplified as
\[
\frac12  \lambda^2 \|\Pi^{-1} \varphi'\|_a^2 - \frac12 \| \Pi^{-1}\Pi b(\varphi) \|_a^2 =0.
\]
Since $\lambda>0$,  we choose the positive root of the above quadratic equation,
\begin{equation}
\label{eqn:t2s}
\lambda = \frac{\|B\|_a}{\|{\nu}\|_a}
\end{equation}
where   
\[ B:= \Pi^{-1}\Pi b(\varphi), \qquad   \nu:= \Pi^{-1} \varphi' .\]
 
Therefore, we obtain the expression of $\hat{S}$ for $\varphi \in \mathcal{A}$,
\begin{equation}
\label{eqn:gS-form} 
\begin{split}
\hat{S}[\varphi] &=\int_0^1  \inpd{  \lambda   \Pi^{-1} {\varphi}'    -  \Pi^{-1}\Pi b(\varphi)   }    {\varphi'}_a  \d \alpha
  \\
    &=\int_0^1  \inpd{\lambda \nu    - B    }    {\varphi'}_a  \d \alpha
\\
    &=\int_0^1  \inpd{ \frac{\|B\|_a}{\|{\nu}\|_a}
 \nu    - B    }    {\varphi'}_a  \d\alpha\\
 &=\int_0^1    {\|B\|_a}{\|{\Pi^{-1}\varphi'}\|_a}
     - \inpd{ B}    {\varphi'}_a  \d\alpha.
\end{split}
\end{equation}
Here, Proposition \ref{prop:aorth} is used again for the last equality.

%
 \section{Constrained Minimum Action Method}
 
 So far  we have derived two  action functionals on $\mfd$, \leqref{S} on the space of 
 absolution continuous functions $C([0,T],\mfd)$ and  
\leqref{eqn:gS-form} on the space of curves living on $\mfd$.
The variational solutions of these acton functionals will give the minimum action path.
The variational problems are  solved by numerical optimization solver.
We next discuss about the numerical issue of this  variational problem  for $\mfd$ being explicitly specified by constraint functions.

 Let's recall that   the constraints for the system  are specified by non degenerated constraint functions   $c_k(x)=0$, $k=1,2,\ldots, n-d$.
Thus  \begin{equation} \label{eqn:mfdcon} \mfd=\{x\in \Real^n:  c_k(x)=0, k=1,2,\ldots, n-d\}. \end{equation}
 The basis vector for the space $\Ker(\Pi_x)$ is  $\xi_k=\nabla c_k(x)$.
Explicit formula of $\Pi^{-1}$ can be expressed in terms of $\xi_k=\nabla c_k$ ,
following the same procedure as in Proposition  \ref{cor:xi}. Then 
the variational problem for \leqref{S} and  
\leqref{eqn:gS-form}  can be numerically solved under the constraint $\{c_k=0\}$
by any modern optimization  solver.  When the local coordinate representations for  \leqref{S} and  
\leqref{eqn:gS-form} are available for some practical problems, the optimization procedure 
can be performed directly in the local coordinate form.

The calculation of  the minimum action curve for the geometric action functional
involves a reparametrization step which is based on the calculation of  the arc length of the curve $\varphi(\alpha)$.
It may be more natural to use the geodesic distance on $\mfd$ to define the arc length,
but  it is practically convenient to just use  the Euclidean arc length.
If the number of  discrete images in representing the curve is sufficiently large,
these two choices of the distance between neighbouring images measured by geodesic or Euclidean metrics would not give much difference. 

In the following,  we describe  one example of the action functionals  for  the spherical case $\mfd=\mathbb{S}^{d}$ where $d=n-1$.
When $n=3$, this correspond to the first example in next section of a rigid rod model for 
liquid crystal.
The constraint function for $S^{d-1}$ is  $c(x)=\|x\|^2 -1=0$.
The projection onto the tangent space is $\Pi_x = I - \nd(x) \otimes  \nd(x)$ where $\nd(x)=x/\|x\|$ is the unit ($L^2$ norm) normal.
 $\Ker(\Pi_x)=\{\nd(x)\}$ and $\img(\Pi_x)=T_x\mathbb{S}^{n-1}$.
The  calculation in Proposition  \ref{cor:xi} shows that  for any $v\in \img(\Pi)$
\begin{equation*}
 \Pi^{-1} v = v  -  \frac{\inpd{v}{\nd}_a}{\inpd{\nd}{\nd}_a}   \nd
= v  -  {\inpd{v}{\nd_a}_a}   \nd_a
\end{equation*}
and $\|\Pi^{-1} v \|_a^2 = \|v \|_a^2 - { (\inpd{\nd_a}{v}_a)^2} $, 
 where $\nd_a = \nd/\|\nd\|_a$ is the unit vector in sense of $\|\cdot\|_a$ norm.
 Then,  the action functional $S^\mfd_T$ in \leqref{eqn:rate-mfd} 
 becomes
\begin{equation}
\label{eqn:rate-mfd-2}
\begin{split}
S^\mfd_T [\phi]  =
&  \frac12 \int_0^T  
\left \| \dot{\phi}-\Pi b(\phi) \right \|_a^2
-\inpd{\frac{\phi}{\|\phi\|_a}}{\dot{\phi}-\Pi b(\phi)}_a^2\d t.
\end{split}
 \end{equation}
Note that the first term is exactly $S^1_T$ \leqref{eqn:S1}, and $S^\mfd_T[\phi] \leq S^1_T[\phi]$.
Likewise, we have the expression of the geometric action function  
\leqref{eqn:gS-form}  in this case, which is
\[ \begin{split} \hat{S}[\varphi] = \int _0^1 & \sqrt{\left ( \|\Pi b\|^2_a - \inpd{\Pi b}{\varphi}^2_a/\|\varphi\|^2_a\right ) 
\left ( \|\varphi'\|^2_a - \inpd{\varphi'}{\varphi}^2/\|\varphi\|^2_a \right )}  \\
&-  \inpd{\Pi b}{\varphi'}_a  + \inpd{\Pi b}{\varphi}_a \inpd{\varphi'}{\varphi}_a / 
\|\varphi\|^2_a~\mbox{d}\alpha. 
\end{split}\]

 \section{ Examples}
 
 In this section, we apply the constrained minimum action method to study the transition pathways 
 for  the motion of  one class of liquid crystal molecules. 
 This type of macromolecules are usually modelled as rigid rods so the configuration space for each rod  is $\mathbb{S}^2$. More realistic  models such as 
 general bead-rod-spring models have more complex intrinsic constraints for the molecular configurations; the details are well explained  in Chapter 5 of  reference  \cite{OttingerBook}.
 The rigid rod model we are studying here is  the typical  building block for those 
 chain models.
 
 \begin{figure}[htbp]
\begin{center}
\includegraphics[width=0.46\textwidth]{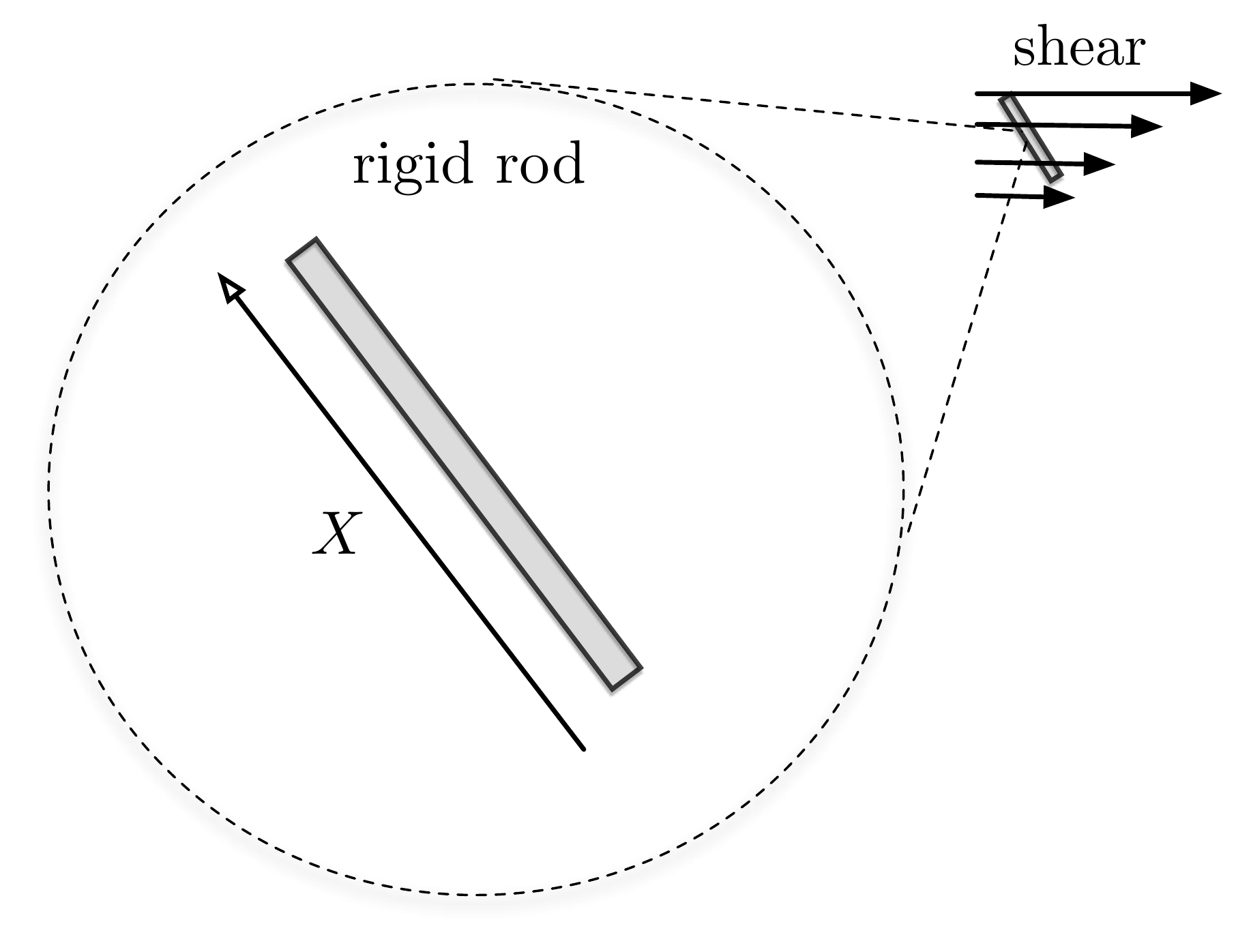}
\caption{Rigid rod polymeric model in shear flow.
The length of the directed vector $\bX$ is one. }
\label{fig:rod-1}
\end{center}
\end{figure}

Typically, there are many equilibrium states for the molecular configurations.
Depending on the interaction between molecules,  there could be  some  spontaneously preferred directions  $\bX$ for the molecules.
In many cases where the ensemble statistics is of interest, the direction $\bX$ and $-\bX$ is undistinguished due to symmetry.
But at microscopical level,  each   individual configuration does switch between the symmetric two metastable states $\bX$ and $-\bX$.
When these macromolecular polymers are added into solvent (Figure \ref{fig:rod-1}), then
the mixed solution  has  interesting hydrodynamical and rheological features different from the Newtonian fluid.
The study of complex fluid mainly focuses on  macroscopic quantities of   polymeric fluid, such as   viscoelasticity.
However, the change of   macromolecular  configurations  at the microscopic level due to   thermal fluctuation and   fluid shear is of its own interests,
in particular when these macromolecules, for instance liquid crystals,  are directly responsible for some   physical mechanisms in practice
 such as colour control for display devices. 

In the following, we present two examples to understand the transition paths in the rigid rod model.
 In the  first example, we study the single rod molecule with   quadratic potential in shear flow.
  Due to spheric symmetry, any  linearly stable state $\bX$ has a symmetric stable one $-\bX$.
  The transition from $\bX$ to $-\bX$ corresponds to the flip over  process of the rod molecule.
 How the shear rate impacts this flip over process  is of our interest. 
 Our second example includes two rods  with interaction between them.
This is the simplest case for the weakly interaction particle system \cite{xiang2004}.  
To see how the anisotropic  diffusion tensor play roles in transition path,
 we artificially  assign two different  diffusion coefficients,  $\sigma_1$ and $\sigma_2$,  for the two rods and 
 investigate the effect of the ratio  $\sigma_2/\sigma_1$ on the transition pathways.
 Although the diffusion coefficient  (i.e., temperature) of  two molecules seem to be the same in physical reality,
 our manipulation of anisotropic noise in this model  produces some interesting results, which could 
 be instructive in the  general case of the state-dependent noise $\sigma(x)$ and may be quite useful 
 when the precise control of noise size  for each individual rod (or two groups of rods) is possible.  
  
  Lastly, we remark that we only report the results  from the constrained minimum action method 
  based on  the geometric  action formulation. Thus, the objects we calculated are curves in the 
  phase space.  The pathways from the constrained \FW action functional are consistent with these results 
  when the underlying time interval is sufficient large.
  
 \subsection{Flip over process  of one rigid rod  }

Consider a unit sphere $\mathbb{S}^2$ in $\Real^3$.     $X=(X_1,X_2,X_3)\in \Real^3$.  
 Let $V: \Real^3 \to \Real$ be the potential energy with symmetry $V(x)=V(-x)$ and 
  $W_t$ be a Brownian motion in $\Real^3$.
Write the normal vector  $\nd(x)=x / \|x\| \in \mathbb{S}^2$. The motion of the rod molecule 
in consideration is described by the following equation,
\[  \d X = (I-\nd(X) \nd(X)^\transpose ) \bigg( (-\nabla V(X) +  K_0 X) \d t +  \sqrt{\eps} \circ \d W_t \bigg)
 \]where   $K_0$ is the matrix of the shear rate tensor in the Cartesian coordinate.

Here the noise is isotropic and the manifold $\mfd $ is $\mathbb{S}^2$. The  
 geometric action functional in   \leqref{eqn:gS-form}  is  reduced to 
 \begin{equation}
\begin{split}
\hat{S}[\varphi]  &=\int_0^1    {\|\Pi^{-1}\Pi b(\varphi)\|}{\|{\Pi^{-1}\varphi'}\|}
     - \inpd{ \Pi^{-1}\Pi b(\varphi)}    {\varphi'} d\alpha
     \\ 
     &=\int_0^1    {\|\Pi b(\varphi)\|}{\|{\varphi'}\|}
     - \inpd{ \Pi b(\varphi)}    {\varphi'} d\alpha
\end{split}
\end{equation}
where  $ {\|\varphi(\alpha)\|=1} $ for all $\alpha$.

 We assume the following quadratic form of the external potential function $V$
 \begin{equation}
 \label{eqn:qV}
 V(x) = \sum_{i=1}^3 \frac12 \mu_i x_i^2, \mbox{ where } \mu_3>\mu_2>\mu_1>0.
 \end{equation}
 The two local minima  of $V$ on $\mathbb{S}^2$ are 
$\vec{e}^{(1)}=(1,0,0)$ and  $-\vec{e}^{(1)}=(-1,0,0)$; the two local maxima 
are $\vec{e}^{(3)}=(0,0,1)$ and $-\vec{e}^{(3)}=(0,0,-1)$; the saddles are $\vec{e}^{(2)}=(0,1,0)$
and $-\vec{e}^{(2)}=(0,-1,0)$.   In  the example below, we simply set  $(\mu_1,\mu_2,\mu_3)=(1,2,3)$.

For the quadratic potential \leqref{eqn:qV},  the SDE  then becomes  the following form 
\begin{equation}
\label{eqn:linSDEonS2}
  \d X = (I-n(X) n(X)^\transpose ) ( KX \d t +  \sqrt{\eps} \circ \d W_t )
  \end{equation}
where $K=\mbox{diag}\{\mu_1,\mu_2,\mu_3\}  + K_0$.
We consider two forms of shear rate matrix $K_0$
 corresponding to different directions of the shear flow.

\subsubsection{Shear flow: example 1}
\label{sssec:shex1}
We first consider the following shear flow where $x_1$ is the streamwise direction,
$x_2$ is the shearwise direction
and $x_3$ is the spanwise direction. So it is  assumed that
  \begin{equation}
  \label{eqn:sh12}
 K_0 =  \begin{bmatrix}0 &\dot{\gamma}_{12}&0 \\ 0  & 0 & 0 \\  0  & 0 & 0 \end{bmatrix}.
 \end{equation}
 Here  the shear rate $\dot{\gamma}_{12}$ is a constant parameter.

The deterministic drift  flow on $\mathbb{S}^2$ is $\dot{X}=(I-\nd \nd^\transpose )KX$.
 The fixed points of this  flow are the following three   vectors on $\mathbb{S}^2$
\begin{eqnarray}
\nd^{(1)}&=&(1,0,0)^\transpose, \notag \\
\nd^{(2)}&=&(-\dot{\gamma}_{12}, \mu_2-\mu_1,0)^\transpose/\sqrt{\dot{\gamma}_{12}^2+(\mu_2-\mu_1)^2},  \label{eqn:sa}\\
\nd^{(3)}&=&(0,0,1)^\transpose, \notag
\end{eqnarray}
and their symmetric counterparts $-\nd^{(i)},\, i=1,2,3$.
In total, there are three pairs of fixed points. Since $\mu_3>\mu_2>\mu_1>0$ in the quadratic potential (\leqref{eqn:qV}), we can
derive the following linear stability results for infinitesimal perturbations.  
The pair $\pm \nd^{(1)}$ is   linearly stable ( classified as  { \it sink } and denoted as $si_+$ and $si_-$, respectively) with two unstable eigen directions
$\vec{e}^{(2)} $ and $\vec{e}^{(3)}  $.
The pair of   $\pm \nd^{(3)}$ is   linearly unstable (classified as {\it source} and denoted as $so_+$ and $so_-$, respectively).
The pair of   $\pm \nd^{(2)}$ is saddle point (and denoted as $sa_+$ and $sa_-$, respectively) with one stable eigen direction $\vec{e}^{(3)}$
(the unstable eigen direction relies on $\dot{\gamma}_{12}$).
The separatrix on the unit sphere between the two sources  $si_+$ and $si_-$ is the great circle  of $\mathbb{S}^2 $in  the plane spanned by 
$sa_\pm$ and $so_\pm$.

The introduction of the shear rate in form of \leqref{eqn:sh12} only affects the orientation of the saddle point (\leqref{eqn:sa}).
The positive value of  shear rate $\dot{\gamma}_{12}$ has the effect of rotating the 
saddle direction $\nd^{(2)}$ counterclockwise (looking down from $x_3$-direction, i.e.,  vertical  direction).
The negative $\dot{\gamma}_{12}$  gives the opposite rotation direction.

We are  concerned with the flip over process of the rigid rod, 
i.e.,  the transition between two symmetric stable fixed points  $si_+=\nd^{(1)}$ and $si_-=-\nd^{(1)}$. The minimal action for this transition is related to the 
 frequency of this process (  $\propto \exp(-\inf S/\eps)$ \cite{FW1998}). 
 The smaller the  minimal action, the more frequently the rod flips between   two stable states.

To resolve all possible minimizers of the variational problem $\inf_{\varphi\in \mathbb{S}^2} \hat{S}[\varphi]$,
the initial guesses of the path should be carefully constructed. The  idea of setting initial guess is as follows.
Since on the separatrix  between $si_+$ and $si_-$, there are four  fixed points,  $so_\pm$
and $sa_\pm$.  We then construct the different initial paths passing through these points, respectively.
In consideration of the symmetry for the case of $so_\pm$,  we only need to test three different initial guesses,
which give three different local minima  of the action functional $\hat{S}$.
As a result, the obtained three minima correspond to the minimal actions from $si_+$ to saddles $sa_-$, $sa_+$, and  
  $so_-$ (or $so_+$), respectively. 
  The  minimum among these three  minimized actions gives the global optimum and thus corresponds to the correct transition path between $si_-$ and $si_+$.
  Refer to Figure \ref{fig:sd_sh_vs_act-1} for the plot of these three actions when the shear rate is varied.
 This evidence shows that   the shear of the flow field lowers the global minimum of the action,  hence increase the 
 flip over frequency.  At  a high shear rate,  the frequency could be so large that the rod molecule
 would oscillate  between the direction $si_-=-\nd^{(1)}$ and $s_+=\nd^{(1)}$. 
 
\begin{figure}[htbp]
\begin{center}
\includegraphics[scale=0.38]{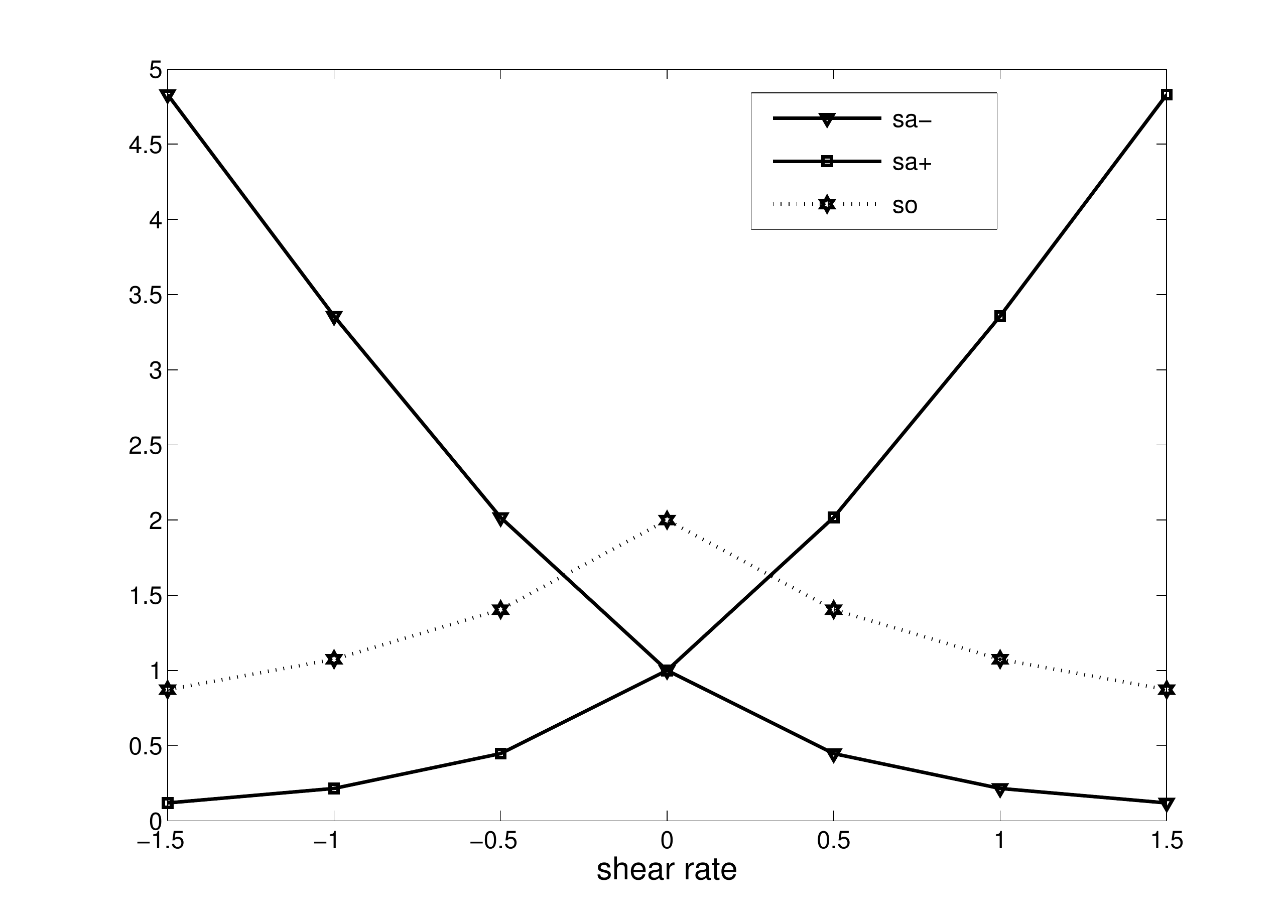}
\caption{ The minimum actions (vertical axis) for three paths from $si_+$ to $sa_-$, $sa_+$ and $so_\pm$, respectively.}
\label{fig:sd_sh_vs_act-1}
\end{center}
\end{figure}
%
%
%

\begin{figure}[htbp]
        \centering
        \begin{subfigure}[b]{0.5\textwidth}
                \includegraphics[width=\textwidth]{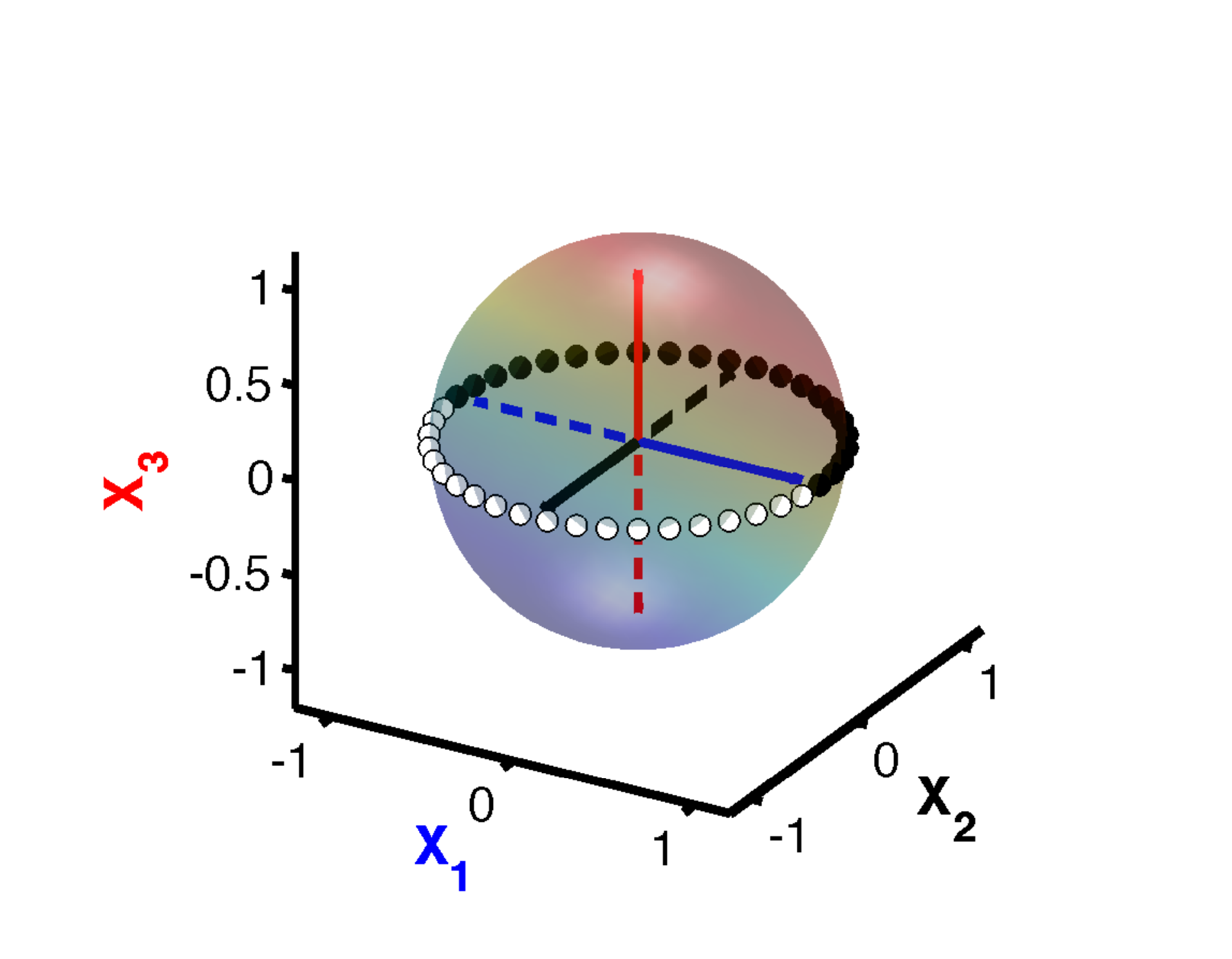}
              \caption{shear rate $\dot\gamma_{12}=0$. }
\label{fig:sd_sh_00}
        \end{subfigure}%
        ~ 
          
        \begin{subfigure}[b]{0.5\textwidth}
                \includegraphics[width=\textwidth]{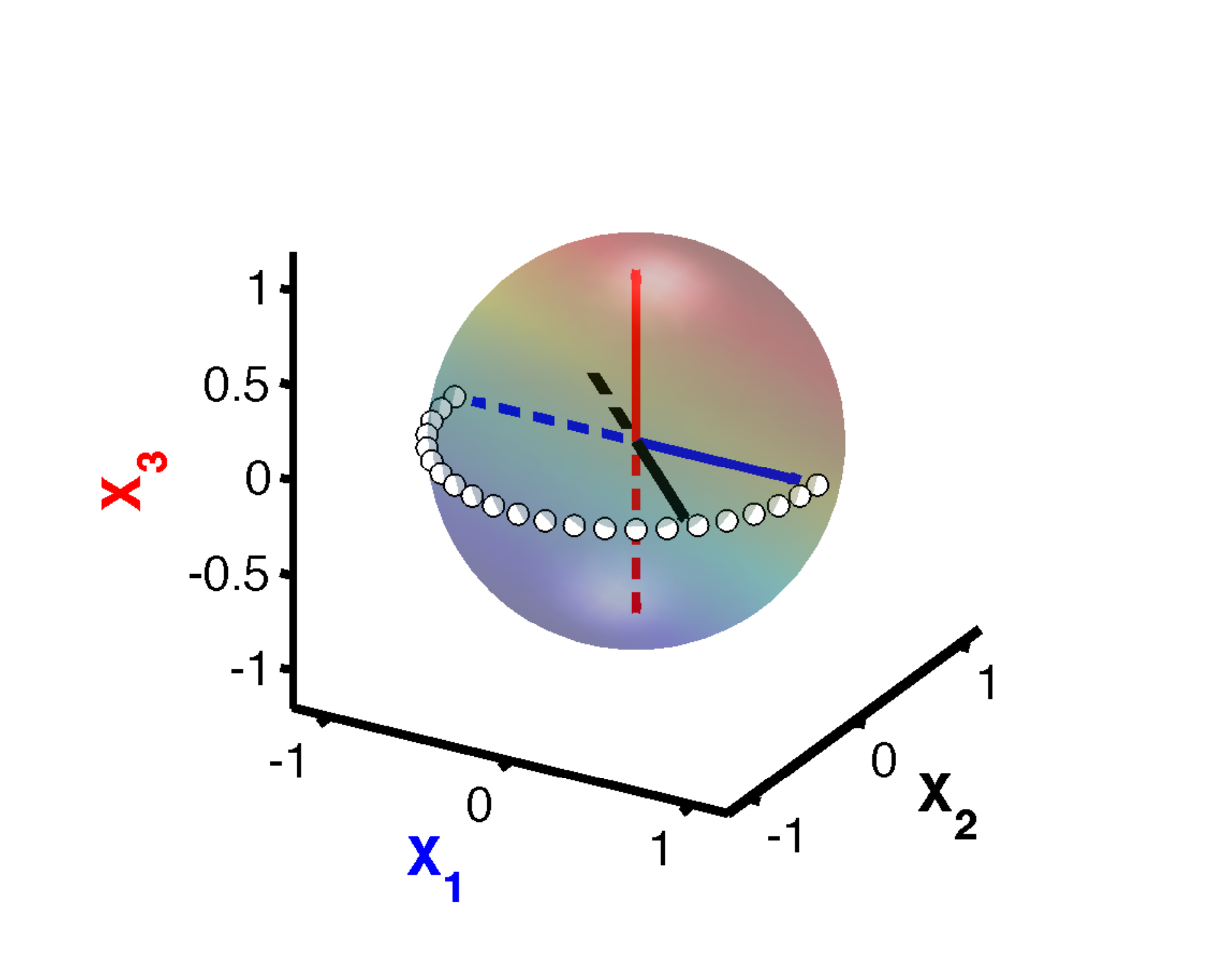}
\caption{shear rate $\dot\gamma_{12}= 1.0$}
\label{fig:sd_sh_1_g12}
        \end{subfigure}
        ~ 
        \begin{subfigure}[b]{0.5\textwidth}
                \includegraphics[width=\textwidth]{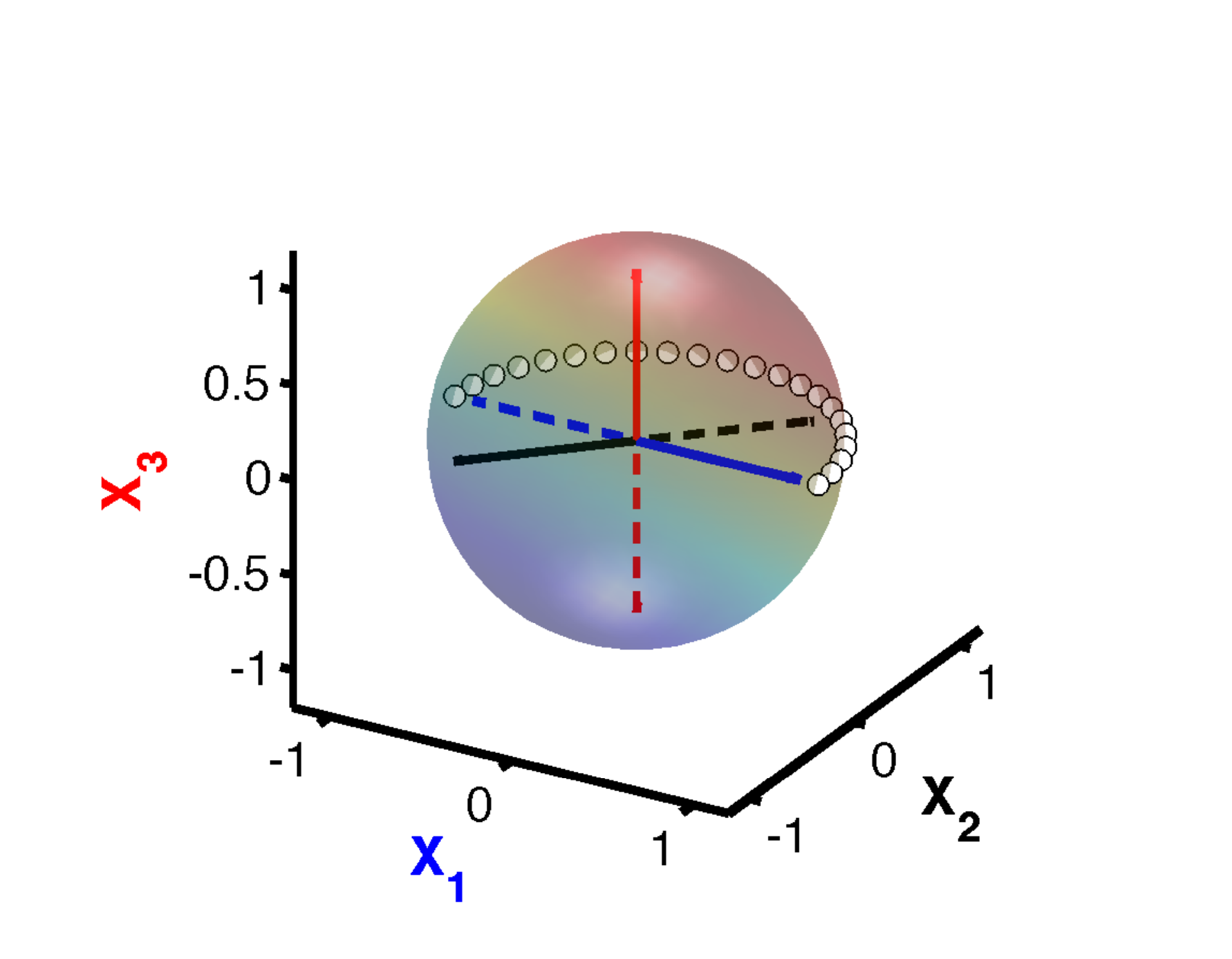}
\caption{shear rate $\dot\gamma_{12}= -1.0$}
\label{fig:sd_sh_n1_g12}
        \end{subfigure}
        \caption{Global minimum action paths for different shear rates. The two symmetric filp-over paths passing $sa_-$ are  shown in  white and black, respectively.}\label{fig:sh-path-1}
\end{figure}

Among the obtained three paths from different initial guesses, the global minimum is 
the one passing the saddle $sa_-$ or $sa_+$, depending on the direction of the shear,
i.e., the sign of $\dot{\gamma}_{12}$. 
These paths are one of semi great circles entirely in the $x_1$-$ x_2$ plane.
Figure \ref{fig:sh-path-1} shows the global minimum action path starting from $si_+$ for $\dot{\gamma}_{12}=0,1,-1$.
For instance, when $\dot{\gamma}_{12}>0$, the saddle $sa_-$ (the solid black line) is shifted closer to $si_+$ so that 
it takes less action for the system to escape from $si_-$ to the separatrix by selecting this saddle $sa_-$.
A similar picture holds for negative $\dot{\gamma}_{12}$ where 
the saddle $sa_+$ (the dashed black  line) is shifted closer to $si_+$.

\subsubsection{Shear flow: example 2}

Next we study the transitions with the following shear rate tensor
 \[
 K_0 = \left[ \begin{array}{ccc} 0 &0 &\dot{\gamma}_{13} \\   0 & 0&0  \\   0 & 0 & 0 \end{array}\right].
 \]
The fixed points for this $K_0$ become 
\begin{eqnarray*}
\nd^{(1)}&=&(1,0,0)^\transpose, \\
\nd^{(2)}&=&(0,1,0)^\transpose, \\
\nd^{(3)}&=&(-\dot\gamma_{13}, 0, \mu_3-\mu_1)^\transpose/\sqrt{\dot\gamma_{13}^2+(\mu_3-\mu_1)^2}.
\end{eqnarray*}
and $ si_-=- \nd^{(1)}$ and $si_+=\nd^{(1)}$ are sinks,
$sa_- = - \nd^{(2)} $ and $sa_+=\nd^{(2)}$ are saddles,
$so_-=-\nd^{(3)}$ and $so_+=\nd^{(3)}$ are  sources.
The heteroclinic orbits among  these fixed points are   similar to the  previous example in \S\ref{sssec:shex1}: 
They are the great circles connecting the neighbouring fixed points.
The separatrix between $si_+$ and $si_-$ is also the great circle in the plane of $sa$ and $so$.
The difference from the example  in \S\ref{sssec:shex1} is that now the shear rate affects the location of the sources $so_\pm$
while the saddles $sa_\pm$ are unchanged. 

Again, we are interested in the transition from $si_+$ to $si_-$ and shall  examine the minimum action paths
with different initial guesses which pass through the fixed points
$so_-$, $so_+$ and $sa_\pm$, respectively.  Figure \ref{fig:sd_sh_vs_act-2} shows the minimum actions 
for these three paths.  From this figure, we can observe that a larger   shear rate deceases the actions
both to the saddle and to the source.  However, there is a competition between these two local minima of the action.  When  the shear rate is small,  the path passing the saddle is the global solution. 
But when the shear rate is very large,  the calculation shows that the action to the source can be slightly smaller than the one
to the saddle so that the transition state changes from the saddle to the source.
This suggests that there is a bifurcation point of the parameter $\dot{\gamma}_{13}^*$ (around $1.9$ for this example in our calculation) 
for the patterns of the global minimum action path.

\begin{figure}[htbp]
\begin{center}
\includegraphics[width=0.45\textwidth]{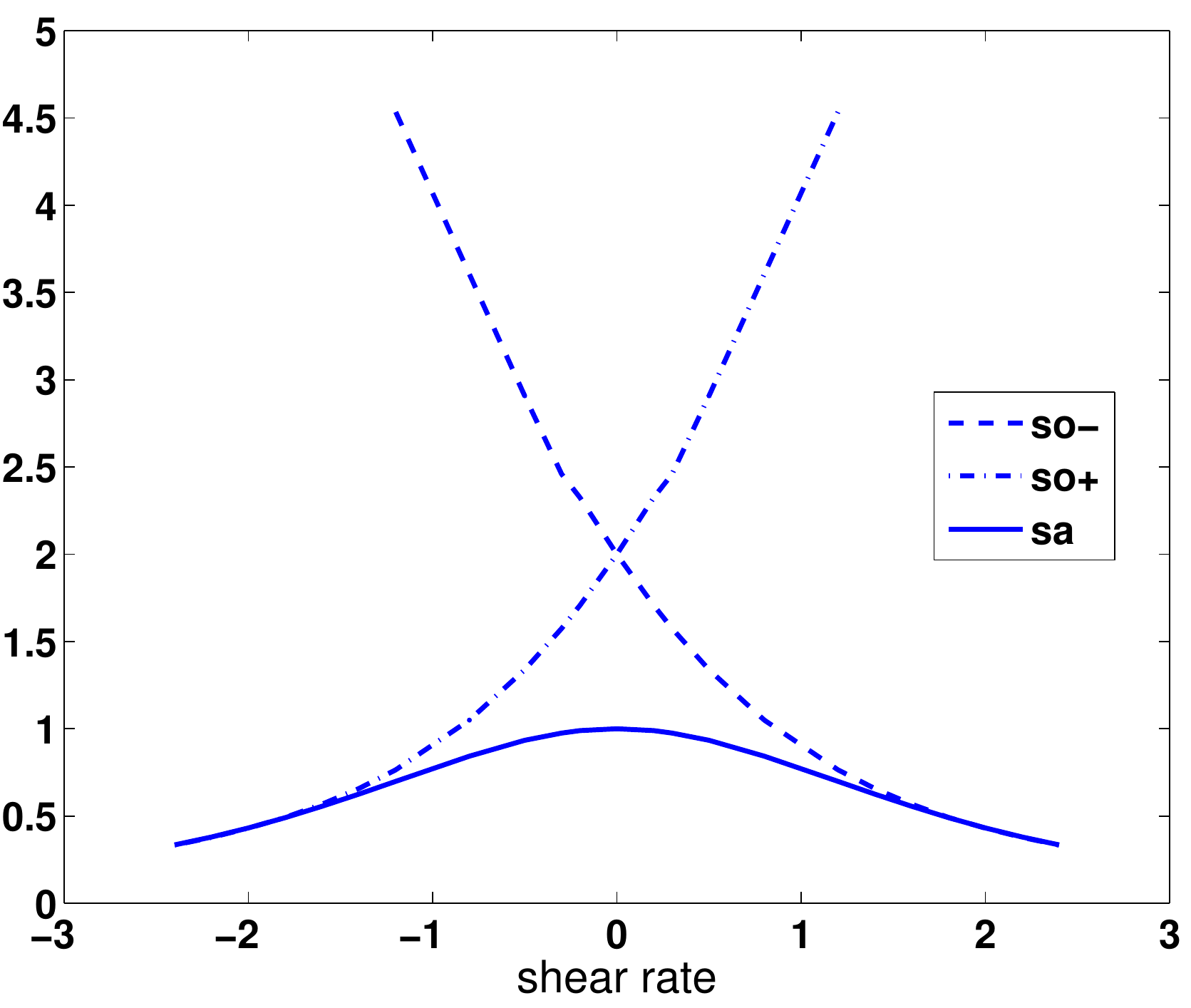}\quad
\includegraphics[width=0.48\textwidth]{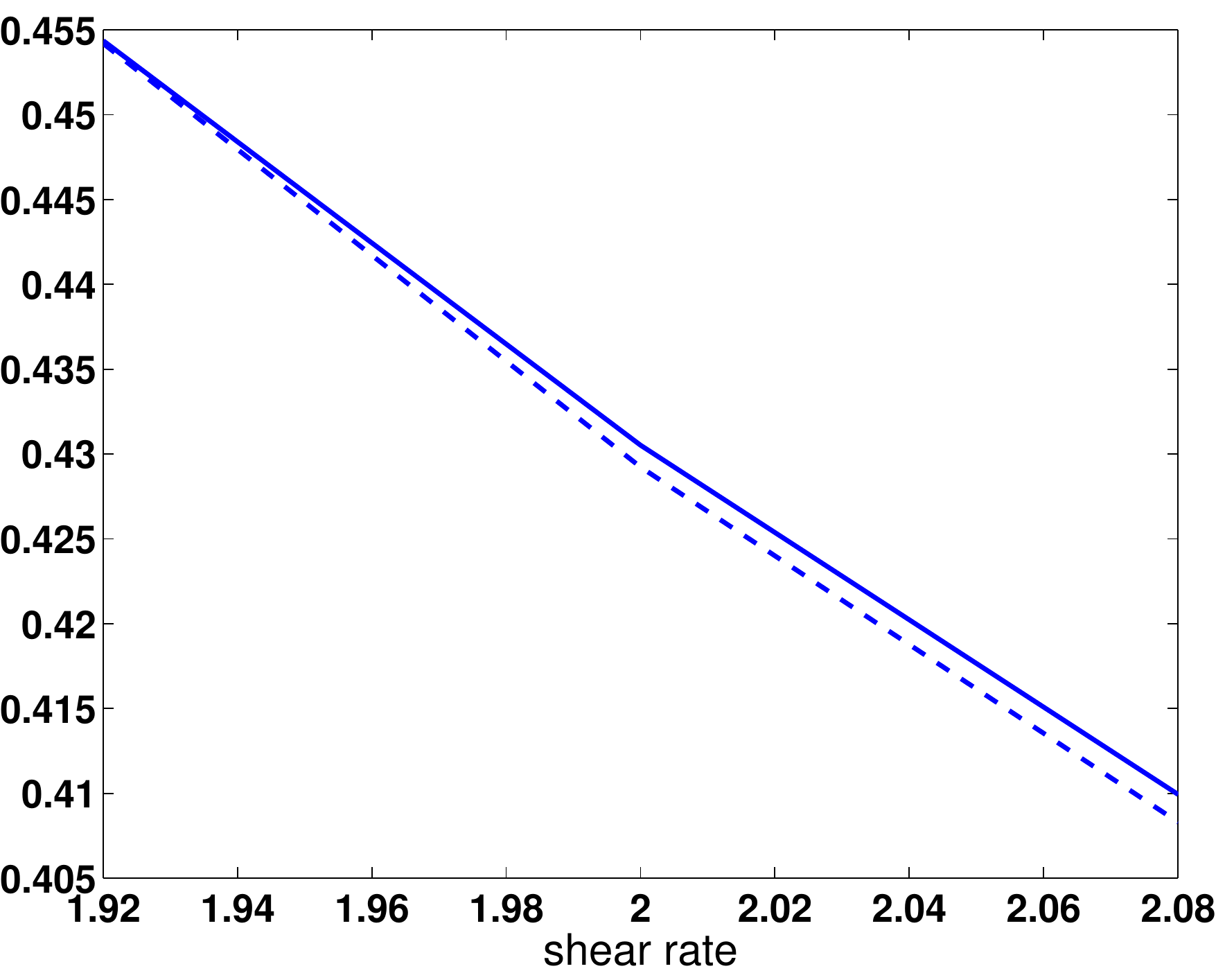}
\caption{The minimum actions for three paths from   $si_+$  to $so_-$, $so_+$ and $sa_\pm$ (either $sa_-$ or $sa_+$ since the minimal actions are the same due to symmetry), respectively.
When $\dot{\gamma}_{13}$ passes the critical value $\dot{\gamma}^*_{13}\approx 1.9$,
the  global minimum path changes from passing  o $sa$ to  passing $so_-$, i.e., the transition state changes from $sa$ to $so_-$.
The right panel is the zoom of the left panel for a window near $\dot{\gamma}_{13}=2$. }
\label{fig:sd_sh_vs_act-2}
\end{center}
\end{figure}

\begin{figure}[htbp]
        \centering
        \begin{subfigure}[b]{0.5\textwidth}
                \includegraphics[width=\textwidth]{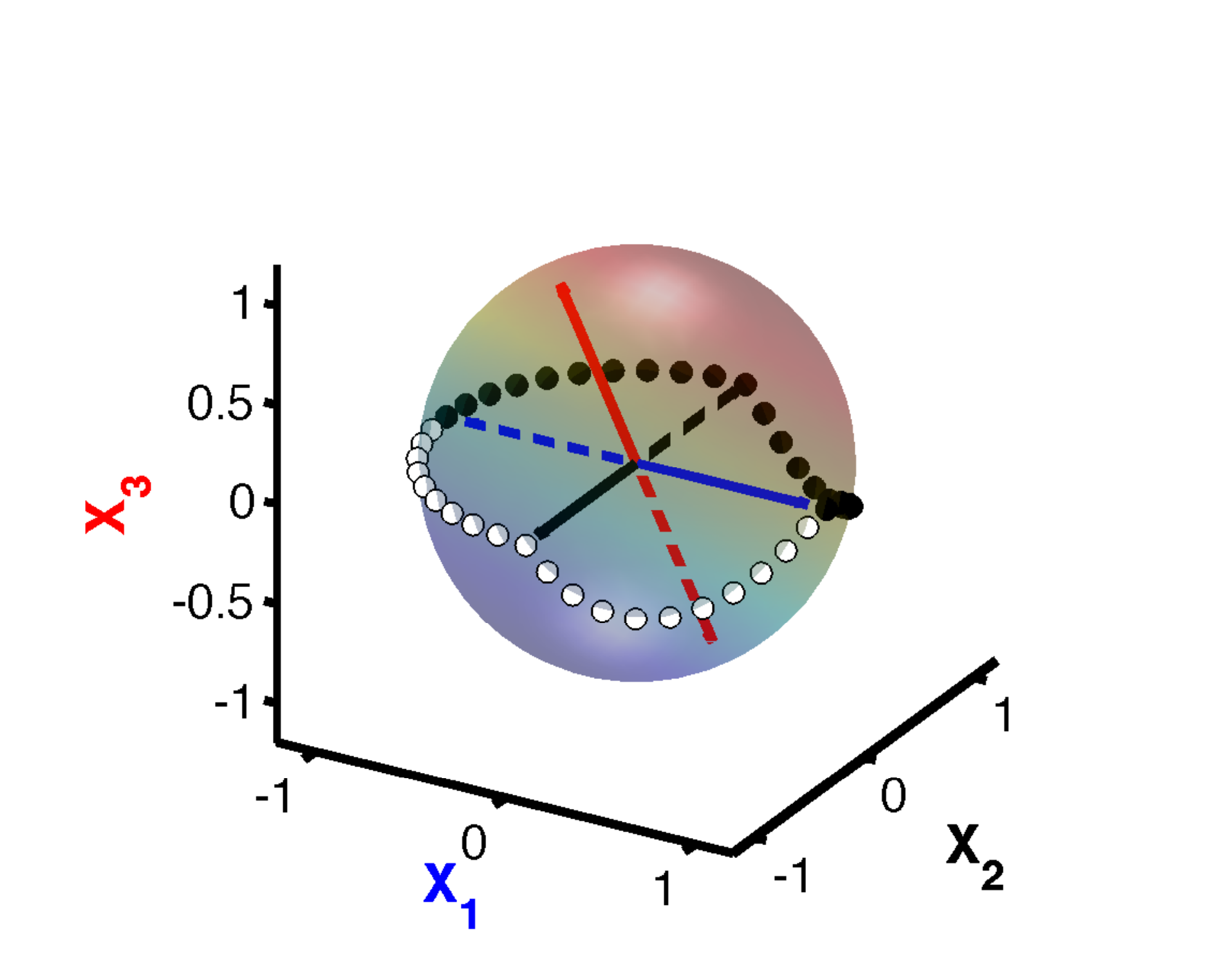}
              \caption{shear rate $\dot{\gamma}_{13} =1.0.$}
\label{fig:sd_sh_1_g13}        \end{subfigure}%
        ~ 
        \begin{subfigure}[b]{0.5\textwidth}
                \includegraphics[width=\textwidth]{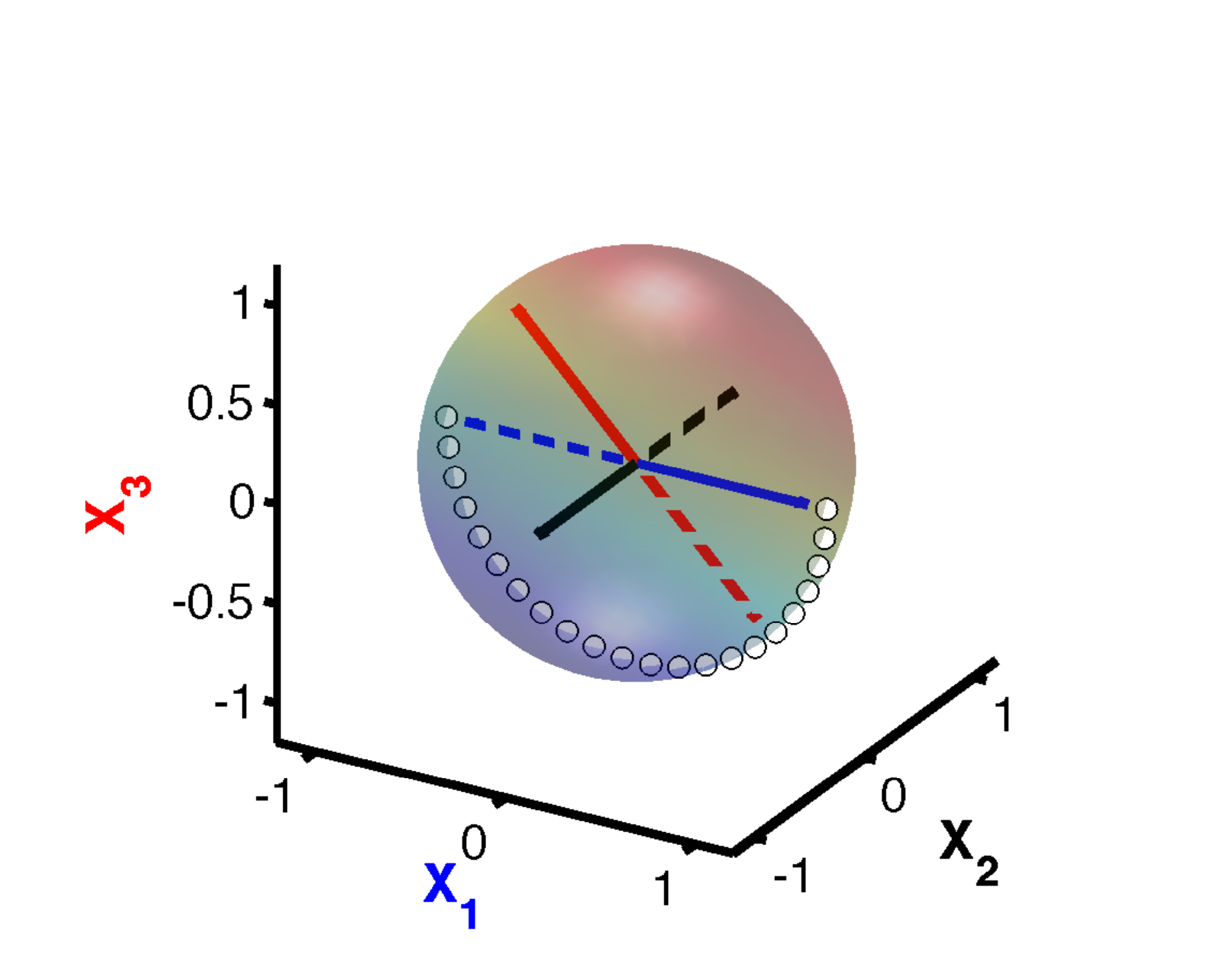}
\caption{shear rate $\dot{\gamma}_{13}= 2.0$}
\label{fig:sd_sh_n1_g13}
        \end{subfigure}
        \caption{Global minimum action paths for different shear rates.
        (a):    The transition path from $si_+$ to $si_-$ through $sa_-$ and its symmetric mirror 
are both shown.
(b) The path is the  semi-circle in the plan spanned by $so$ and $si$ ($x_1$-$x_3$ plane) .
The initial guess of the path in the minimum action method is the path in Figure \ref{fig:sd_sh_1_g13}. }
        \label{fig:sh-path-13}
\end{figure}

The above conclusion can be better understood if we plot the global minimum action path
for $\dot{\gamma}_{13}=1$ and $2$ in Figure \ref{fig:sh-path-13}.
The positive value of  shear rate $\dot{\gamma}_{13}$ has the effect of tilting the 
unstable fixed point $so_+=\nd^{(3)}$ (the solid red line) counter-clockwisely in the $x_1$-$x_3$ plane (looking  from $-x_2$-direction).
Such tilts will pull  $so_-$ (the dashed red line) towards  $si_+$ (the solid blue line) and push $so_+$ away from $si_+$.
  However, when the shear rate is not strong, this push is not significant enough to beat the action
of the path through the saddle $sa_\pm$ (the pair of curves shown  in Figure \ref{fig:sd_sh_1_g13} ).
 When $\dot{\gamma}_{13}$ continues to increase by passing the critical value $\dot{\gamma}^*_{13}$, 
  the shear-induced tilt will become strong enough to 
   lower the action  to  reach  $so_-$  significantly so as to  become a  global soluiton.

In summary, when the shear of the fluid affects the unstable fixed points $so_\pm$ of the molecular configuration on $\mathbb{S}^2$, 
the competition of the minimum action paths passing through the saddle $sa_\pm$ 
or through  the source $so_\pm$ would generate a bifurcation of the patterns of the global path.
The same phenomena have been observed before,  for  instance, in some  planer  (non-gradient) system  \cite{Maier1993PRE}. 
For real problems, the shear rate tensor $K_0$ may be the combination of the above two examples we have studied; from the analysis 
above, we  expect that the similar  bifurcation of the pathways could happen for different size of the shear rate.
It is also generally believed that the shear would  lower the global minimum  action, thus increase the flip over frequency.

\subsection{Flip over  of two rigid rods }

 Here, we study a slight generalization of the previous studied single rod case,
a toy model of two interactive  rigid rods. 
Let $\vec{X}_1, \vec{X}_2$ be  the directed unit vector of two rods. We consider the following 
 stochastic dynamics  on $\mathbb{S}^2\times\mathbb{S}^2$,  
\begin{equation}
\left\{
\begin{aligned}
\d \vec{X}_1 &= (I-\vec{n}_1\vec{n}_1^\T)(-\nabla V(\vec{X}_1)\d t -\nabla_{\vec{x}_1}U(\vec{X}_1,\vec{X}_2)\d t + \sigma_1\sqrt{\epsilon}\circ \d\vec{W}_1),\\
\d \vec{X}_2 &= (I-\vec{n}_2\vec{n}_2^\T)(-\nabla V(\vec{X}_2)\d t -\nabla_{\vec{x}_2}U(\vec{X}_1,\vec{X}_2)\d t + \sigma_2\sqrt{\epsilon}\circ \d\vec{W}_2),
\end{aligned}
\right.
  \label{eq:dyn_fun}
\end{equation}
where $\nd_i=\bX_i/\|\bX_i\|, \, i=1,2$. $\sigma_1$ and $\sigma_2$ are two positive  constants.
Here   $U(\vec{x}_1,\vec{x}_2): \mathbb{S}^2\times\mathbb{S}^2\to\Real$ describes the 
interactions of these two rods. 
One common choice of this potential $U$ is the following Maier-Saupe  potential 
\begin{equation}
\label{eqn:U}
 U(\vec{x}_1,\vec{x}_2)=A\sin^2(\theta-\theta_0) 
\end{equation}
where $\theta$ is the angle between $\vec{x}_1$ and $\vec{x}_2$ (Figure \ref{fig:rod-2}),  $A$ is a positive number and  $\theta_0$ is the preferred angle.
We assume $\theta_0=0$ without loss of generality.

\begin{figure}[htbp]
\begin{center}
\includegraphics[scale=0.4]{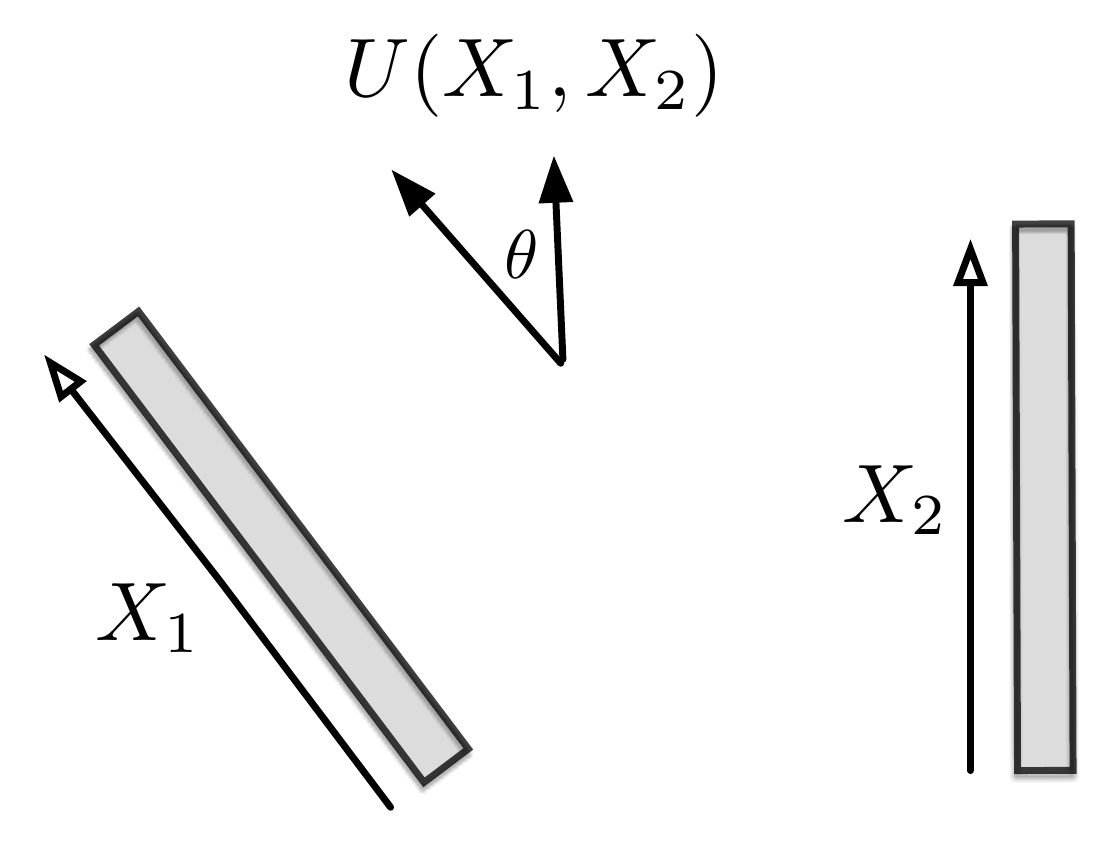}
\caption{Two rigid rods with interaction potential $U$. }
\label{fig:rod-2}
\end{center}
\end{figure}

The model we are studying in  \leqref{eq:dyn_fun}  has no effect of shear flow and is a reversible system when $\sigma_1=\sigma_2$.
In the following, we are interested in how different values of the ratio  $\sigma_2/\sigma_1$ affects the transition paths.
First we give the action functional form for this example. 
We write the path as a pair  $\varphi=[\varphi_1,\varphi_2]\in\Real^3\times\Real^3$.
 Denote   $b(\varphi)=[b_1(\varphi),b_2(\varphi)]\in\Real^3\times\Real^3$
where $b_i(\varphi) = -\nabla V(\varphi_i) -\nabla_{\bx_i} U(\varphi_1,\varphi_2) $ corresponds to the  rod $i$.
The geometric  action functional \eqref{eqn:gS-form} for \leqref{eq:dyn_fun} thus has the following form  
\begin{equation}
\label{eq:AF}
  \hat{S}[\varphi] = \int_0^1\sqrt{\frac{\norm{b_1}^2}{\sigma_1^2} + \frac{\norm{b_2}^2}{\sigma_2^2}}\sqrt{\frac{\varphi_1'^2}{\sigma_1^2}+\frac{\varphi_2'^2}{\sigma_2^2}} - \frac{\inpd{b_1}{\varphi_1'}}{\sigma_1^2} - \frac{\inpd{b_2}{\varphi_2'}}{\sigma_2^2}\d \alpha.
\end{equation}
The constraint is $\norm{\varphi_1}=\norm{\varphi_2}=1$.

\bigskip

We   choose the quadratic potential as in the previous example of one rod.
 $V(\vec{x}) = \vec{x}^\T K\vec{x}$/2.  Here $K=\mbox{diag}\{\mu_1,\mu_2,\mu_3\}$ where $\mu_1<\mu_2<\mu_3$. 
 Next, we  show the following property of the drift flow of \leqref{eq:dyn_fun} on $\mathbb{S}^2\times\mathbb{S}^2$
 for the weak strength of the interaction.

\begin{proposition}
If the coupling constant $A$ in the potential \leqref{eqn:U}  satisfies  \begin{equation}
    \label{eq:cond}
 A <  \frac14    \min\limits_{i\neq j}\abs{\mu_i-\mu_j}
  \end{equation}
hold, then all fixed points of the deterministic drift flow of \leqref{eq:dyn_fun}  are the following $36$ points 
\[ (\pm \vec{e}_i,\pm \vec{e}_j),  ~~~ i,j=1,2,3,\]
where $\vec{e}_i$ is the unit eigenvector of $K$ for eigenvalue $\mu_i$, for instance $\vec{e}_1=(1,0,0)$.
 Moreover, the four points $(\pm \vec{e}_1,\pm \vec{e}_1)$ are stable (classified as {\it sink}),
the four points $(\pm \vec{e}_3,\pm \vec{e}_3)$ are unstable (classified as {\it source}) and other fixed points are all saddles. 
\end{proposition}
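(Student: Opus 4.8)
The plan is to observe first that the deterministic drift flow of \eqref{eq:dyn_fun} is the negative Riemannian gradient flow on $\mathbb{S}^2\times\mathbb{S}^2$, with the standard product metric, of the total energy $E(\vec{x}_1,\vec{x}_2)=V(\vec{x}_1)+V(\vec{x}_2)+U(\vec{x}_1,\vec{x}_2)$: indeed the drift of $\vec{X}_i$ equals $(I-\vec{n}_i\vec{n}_i^\T)(-\nabla_{\vec{x}_i}E)$, which is minus the $i$-th component of the Riemannian gradient of $E$. Since $\|\vec{x}_i\|=1$, the coupling \eqref{eqn:U} (with $\theta_0=0$) equals $U=A\sin^2\theta=A\bigl(1-(\vec{x}_1\cdot\vec{x}_2)^2\bigr)$, so with $c:=\vec{x}_1\cdot\vec{x}_2$ the critical-point (Lagrange-multiplier) equations of $E$ on $\mathbb{S}^2\times\mathbb{S}^2$ read
\[
K\vec{x}_1=\lambda_1\vec{x}_1+2Ac\,\vec{x}_2,\qquad K\vec{x}_2=\lambda_2\vec{x}_2+2Ac\,\vec{x}_1
\]
for some scalars $\lambda_1,\lambda_2$. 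I would first verify that each of the $36$ points $(\pm\vec{e}_i,\pm\vec{e}_j)$ solves these: if $i\ne j$ then $c=0$ and the equations reduce to $K\vec{e}_i=\mu_i\vec{e}_i$, $K\vec{e}_j=\mu_j\vec{e}_j$; if $i=j$ then $c=\pm1$ and $\vec{x}_2=c\vec{x}_1$, so $2Ac\,\vec{x}_2=2A\vec{x}_1$ and the equations hold with $\lambda_1=\lambda_2=\mu_i-2A$.

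To see these are the only fixed points, I would split on $\dim W$ where $W:=\spn\{\vec{x}_1,\vec{x}_2\}$. If $\dim W=1$, then $\vec{x}_2=\pm\vec{x}_1$, whence $2Ac\,\vec{x}_2=2A\vec{x}_1$ and the first equation forces $\vec{x}_1$ to be an eigenvector of $K$; since $K$ has simple spectrum this yields one of the $12$ ``diagonal'' points $(\pm\vec{e}_i,\pm\vec{e}_i)$. If $\dim W=2$, the two equations give $K\vec{x}_1,K\vec{x}_2\in W$, so $W$ is $K$-invariant, and simplicity of the spectrum forces $W=\spn\{\vec{e}_i,\vec{e}_j\}$ for some $i\ne j$; for configurations in such a coordinate plane the full critical-point equations coincide with those of $E$ restricted to $W\cap(\mathbb{S}^2\times\mathbb{S}^2)\cong\mathbb{S}^1\times\mathbb{S}^1$. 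Writing $\vec{x}_1=\cos\phi\,\vec{e}_i+\sin\phi\,\vec{e}_j$, $\vec{x}_2=\cos\psi\,\vec{e}_i+\sin\psi\,\vec{e}_j$ and $\delta:=\mu_j-\mu_i$, the reduced energy is, up to a constant, $\tfrac{\delta}{2}(\sin^2\phi+\sin^2\psi)+A\sin^2(\phi-\psi)$, and taking the sum and the difference of the two first-order conditions gives
\[
\sin(\phi+\psi)\cos(\phi-\psi)=0,\qquad \sin(\phi-\psi)\bigl[\delta\cos(\phi+\psi)+4A\cos(\phi-\psi)\bigr]=0.
\]
Here the hypothesis enters: if $\cos(\phi-\psi)=0$ then the second equation forces $\cos(\phi+\psi)=0$ as well; if instead $\sin(\phi+\psi)=0$ then $\cos(\phi+\psi)=\pm1$ and $|\delta\cos(\phi+\psi)+4A\cos(\phi-\psi)|\ge\delta-4A>0$ because $A<\tfrac14\min_{k\ne l}|\mu_k-\mu_l|\le\tfrac14\delta$, so the second equation forces $\sin(\phi-\psi)=0$. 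In both cases $\phi,\psi\in\tfrac{\pi}{2}\mathbb{Z}$, i.e.\ $\vec{x}_1,\vec{x}_2\in\{\pm\vec{e}_i,\pm\vec{e}_j\}$, again points among the $36$.

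For the stability statement I would evaluate the Riemannian Hessian of $E$ at each critical point, using $\operatorname{Hess}f(\xi,\xi)=\xi^\T(D^2f)\xi-\inpd{\nabla f}{\vec{x}}\,\|\xi\|^2$ on a unit sphere, applied blockwise, with coupling block $D^2_{\vec{x}_1\vec{x}_2}E=-2A(\vec{x}_2\vec{x}_1^\T+cI)$. At a diagonal point $(\pm\vec{e}_i,\pm\vec{e}_i)$ the Hessian block-diagonalizes over the two transverse coordinate directions $m\ne i$, with $2\times2$ blocks whose eigenvalue pair is $\{\mu_m-\mu_i,\ \mu_m-\mu_i+4A\}$; since $4A<\min_{k\ne l}|\mu_k-\mu_l|\le|\mu_m-\mu_i|$, both entries carry the sign of $\mu_m-\mu_i$, so all four Hessian eigenvalues are positive for $i=1$ (sink), all negative for $i=3$ (source), and of mixed sign for $i=2$ (saddle). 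At an off-diagonal point $(\pm\vec{e}_i,\pm\vec{e}_j)$, $i\ne j$, the Hessian splits into the decoupled pair $(\mu_k-\mu_i)\xi_k^2+(\mu_k-\mu_j)\eta_k^2$ ($k$ the third index) plus a $2\times2$ block in the remaining directions whose determinant equals exactly $-(\mu_i-\mu_j)^2<0$, hence is always indefinite; inspecting the three choices of $k$ shows the total signature is always mixed, so every off-diagonal point is a saddle. I expect the only real work to be the case analysis in the two-dimensional reduction --- tracking which trigonometric factor vanishes on each branch and checking that every branch lands on a coordinate configuration --- together with keeping the signs straight in the second-fundamental-form correction of the Hessian; once the gradient-flow structure is recognized, the rest is direct computation.
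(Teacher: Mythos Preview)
Your proof is correct. The paper's argument for locating all fixed points uses a different case split: instead of splitting on $\dim\spn\{\bx_1,\bx_2\}$, it splits on whether $c=\inpd{\bx_1}{\bx_2}$ vanishes. When $c=0$ the two equations decouple and each $\bx_i$ must be a unit eigenvector of $K$; when $c\ne0$ the paper first shows $\bx_1^\T K\bx_1=\bx_2^\T K\bx_2$ and then observes that adding and subtracting the two equations makes $\bx_1\pm\bx_2$ eigenvectors of $K$ (or zero), with eigenvalues differing by exactly $4Ac$, so that $|\mu_i-\mu_j|=4A|c|\le4A$, contradicting the hypothesis. This ``$\bx_1\pm\bx_2$ are eigenvectors'' trick is a bit slicker than your trigonometric reduction---it never parametrizes a coordinate plane---but your route via $K$-invariance of $W=\spn\{\bx_1,\bx_2\}$ is equally valid and makes more explicit why the coordinate planes appear. (One small point: in your inequality $|\delta\cos(\phi+\psi)+4A\cos(\phi-\psi)|\ge\delta-4A$ you implicitly take $\delta>0$; write $|\delta|$ there to cover both orderings of $i,j$.) For the stability classification the paper simply says the conclusions follow from computing the Jacobian matrices and ``neglects the details''; your explicit Riemannian Hessian computation---the block eigenvalues $\{\mu_m-\mu_i,\ \mu_m-\mu_i+4A\}$ at diagonal points and the determinant $-(\mu_i-\mu_j)^2$ of the coupled $2\times2$ block at off-diagonal points---fills in precisely what the paper leaves out.
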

\begin{proof}  It can be verified that any fixed point $(\bx_1,\vec{x}_2)$ must satisfy the following equations 
  \begin{gather}
    -K\bx_1+2A\inpd{\bx_1}{\bx_2}\bx_2-2A\inpd{\bx_1}{\bx_2}^2\bx_1 + (\bx_1^\T K\bx_1)\bx_1 = 0,\label{eq:1}\\
    -K\bx_2+2A\inpd{\bx_1}{\bx_2}\bx_1-2A\inpd{\bx_1}{\bx_2}^2\bx_2 + (\bx_2^\T K\bx_2)\bx_2 = 0,\label{eq:2}\\
    \norm{\bx_1}=\norm{\bx_2}=1.
  \end{gather}
  
If $\inpd{\bx_1}{\bx_2}=0$,  Eqs. \eqref{eq:1} and \eqref{eq:2}  suggest  $\bx_1$ and $\bx_2$ must be unit eigenvectors of $K$
corresponding to distinctive eigenvalues, respectively. It gives 24 fixed points $(\pm \vec{e}_i, \pm \vec{e}_j)$ for $i\neq j$ in this case.

If $\inpd{\bx_1}{\bx_2}\neq 0$, Eqs. \eqref{eq:1} and \eqref{eq:2} together imply that 
$\inpd{\bx_1}{\bx_2}$ $(\bx_1^\T K\bx_1-\bx_2^\T K\bx_2)=0,$
or, $x_1^\T Kx_1=x_2^\T Kx_2=\lambda$. Furthermore, by considering \eqref{eq:1} $\pm$ \eqref{eq:2},
we have $\bx_1\pm \bx_2$ are either zero vector or   an eigenvector of $K$. 
The former case gives the other 12 fixed points $(\pm \vec{e}_i, \pm \vec{e}_i)$.
The latter case that $\bx_1 \pm \bx_2$ is an eigenvector of $K$ will eventually
lead to an equality  $\mu_i-\mu_j=4A\inpd{x_1}{x_2}$. But since it follows
$\abs{\mu_i-\mu_j}=\abs{4A\inpd{x_1}{x_2}}\le 4A\norm{x_1}\norm{x_2}=4A,$
which contradicts to condition \eqref{eq:cond}, there are no  other solutions.

The conclusions of the linear stability are based on the calculation of the Jacobian matrices  at these fixed points. We neglect the details.
\end{proof}


In all, there are 36 different fixed points. 
From the above proof we know that $(\vec{e}_i, \vec{e}_j)$ is a fixed point even without the condition \eqref{eq:cond}. 
If the condition \eqref{eq:cond} does not hold, there may be other fixed points and it  can be shown that there are at most 60 fixed points. 
In our numerical calculations,  we choose $K={\rm diag}\{1,3,5\}, A=0.4$ to satisfy the  condition \eqref{eq:cond}.
In addition, we always let $\sigma_1=1$ but allow $\sigma_2$ to vary.

The transition path we will study is from the initial state $(\vec{e}_1,-\vec{e}_1)$ to the final state $(-\vec{e}_1,\vec{e}_1)$, in which  
both rods flip over their initial directions.  Since the initial and final states both lie in the $\vec{e}_1$-$\vec{e}_2$ plane for each rod, then by symmetry consideration, the transition paths, i.e., the minimizers  of the action functional  \leqref{eq:AF}
must also lie in this plane.  Our numerical calculations based on $\mathbb{S}^2\times \mathbb{S}^2$ indeed verify this fact. Therefore, we can visualize the obtained paths and interpret our results on a  lower dimensional product space  $\mathbb{S}^1\times\mathbb{S}^1$. 
It is convenient to use local coordinates $(\theta_1,\theta_2)\in [0, 2\pi)\times [0,2\pi)$ to denote a point  of the path $(\phi_1,\phi_2)$:
\[ \phi_1=[\cos\theta_1,\sin\theta_1,0],\quad \phi_2=[\cos\theta_2,\sin\theta_2,0]. \]
In this local coordinates representation, the initial and final states $(\vec{e}_1, -\vec{e}_1)$ and $(-\vec{e}_1, \vec{e}_1)$ can be written as $(\theta_1,\theta_2)=(0,\pi)$ and $(\pi,0)$, respectively. There are 16 fixed points on $\mathbb{S}^1\times\mathbb{S}^1$ in total. 
Further taking into account the spatial symmetry, we only need to focus on 4 sinks and 5 saddles  for $(\theta_1,\theta_2) \in [0, \pi]\times [0,\pi]$, as shown in Table \ref{table1} and  Figure \ref{fig:phasespace2}.
In the figure,  the heteroclinic orbits between these fixed points are shown in arrowed lines.
The saddle point $sa_5$,  at the centre of the figure, is on the separatrix of all four sinks
in the phase space
and its unstable manifold has dimension 2.  All other four saddle points 
have one dimensional unstable manifold for each, i.e.  they are index-1 saddles.
\begin{center}
\begin{table}
\begin{tabular}{c|l|c||c|l|l}
\toprule
stable points & $\mathbb{S}^2\times\mathbb{S}^2$ & $(\theta_1,\theta_2)$
&
saddle points & $\mathbb{S}^2\times\mathbb{S}^2$ & $(\theta_1,\theta_2)$\\
\midrule
$si_1$& $ (\vec{e}_1, \vec{e}_1)$ & $ (0,0)$
&
 $sa_1$ & $(\vec{e}_2,\vec{e}_1)$ &  $(\pi/2, 0)$ 
 \\
$si_2$& $(\vec{e}_1,-\vec{e}_1)$ &  $(0,\pi)$
&
$ sa_2$ & $(\vec{e}_1,\vec{e}_2$)& $(0,\pi/2)$ \\
$si_3$& $(-\vec{e}_1,\vec{e}_1)$ & $(\pi, 0)$
&
$sa_3$  & $(\vec{e}_2,-\vec{e}_1)$ & $(\pi/2,\pi/2)$ 
\\
$si_4$ &$(-\vec{e}_1,-\vec{e}_1)$ & $(\pi,\pi)$
&
$ sa_4$ & $(-\vec{e}_1,\vec{e}_2)$ & $(\pi,\pi/2)$\\
&&&  
$ sa_5$ & $(\vec{e}_2,\vec{e}_2)$ & $(\pi/2,\pi/2)$\\
\toprule
\end{tabular}
\caption{\label{table1} 4 sinks and 5 saddles for $(\theta_{1},\theta_{2})\in [0,\pi]\times[0,\pi]$.}
\end{table}
\end{center}
 \begin{figure}[htbp]
 \begin{center}
  \includegraphics[width=0.46\textwidth]{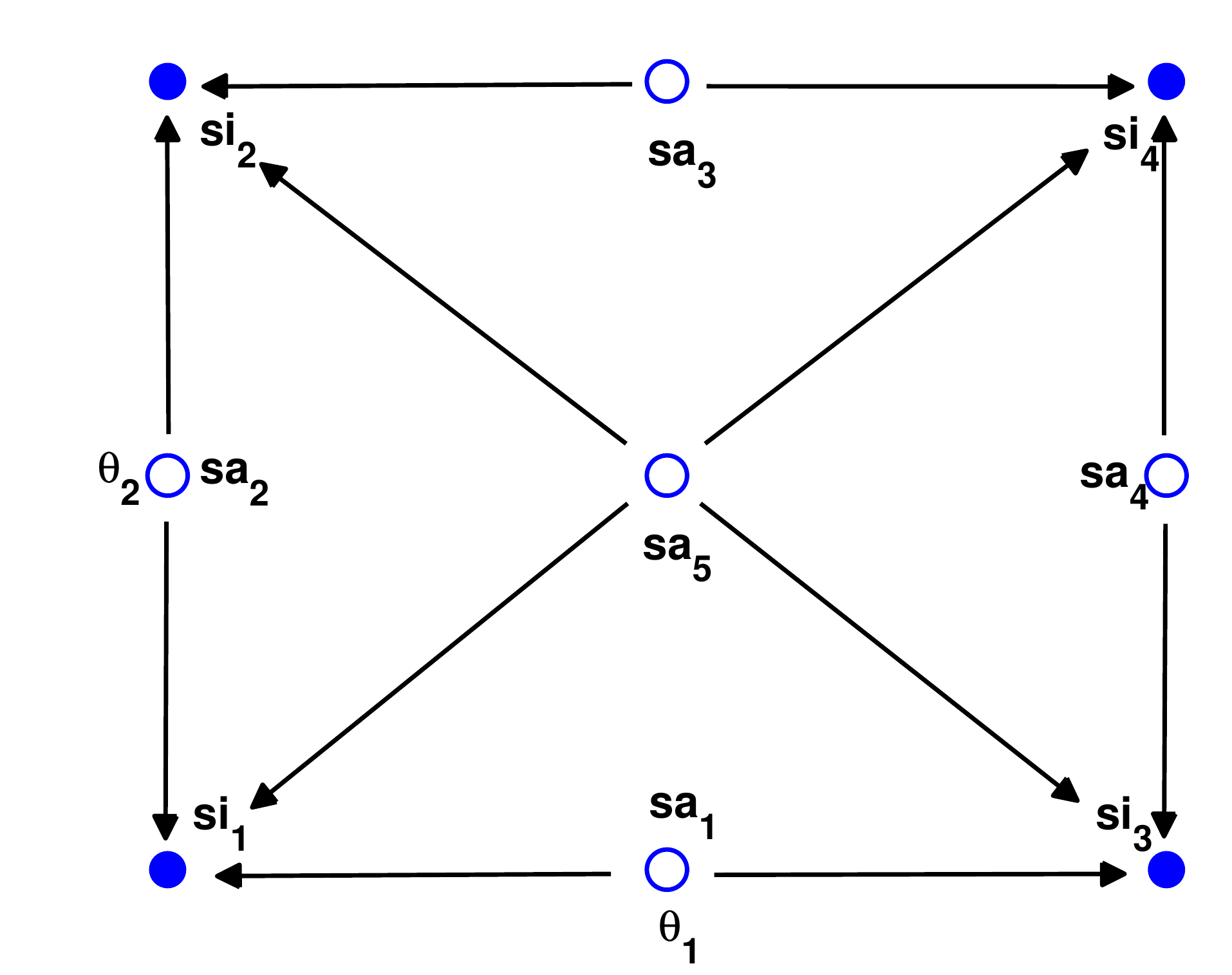}
  \caption{Fixed points in $\theta_1$-$\theta_2$ plane. Sinks are denoted by solid dots ($\bullet$) , saddles are denoted by circles ($\circ$).
  The arrows shows the heteroclinic orbits of the deterministic drift flow.  All saddles have index 1 except that $sa_5$ has index 2. }
  \label{fig:phasespace2}
  \end{center}
\end{figure}

The transition path we studied is   
from $si_2$ to $si_3$ which are two diagonal elements  in   Figure \ref{fig:phasespace2}. 
In solving minimization problem $\inf_{\phi}\hat{S}[\phi]$,  one  critical issue is how to locate the global solution rather than 
trapped by the local ones  \cite{KS-WZE2009}.  Since there is no efficient global minimization  solvers (we used   matlab subroutine  {\texttt{fmincon}}  for nonlinear optimization),
the selection of initial guess of  path is crucial.  We utilize the information of the heteroclinic orbits in Figure  \ref{fig:phasespace2}
and propose the following five routes as our initial guesses by choosing different transition states or intermediate states:
\begin{description}
\item[A] $si_2\to sa_5\to si_3$,
\item[B] $si_2\to sa_2\to sa_5\to si_3$,
\item[C] $si_2\to sa_2\to si_1\to sa_1\to si_3$,
\item[D] $si_2\to sa_3\to sa_5\to si_3$,
\item[E] $si_2\to sa_3\to si_4\to sa_4\to si_3$. 
\end{description}
Then, each choice of initial guess gives a local minimum action path
and the obtained minimized actions for the five solutions are plotted 
in Figure \ref{fig:action}.  The lowest value of these five curves gives the global minimum action.

\begin{figure}[htbp]
\begin{center}
  \includegraphics[width=0.6\textwidth]{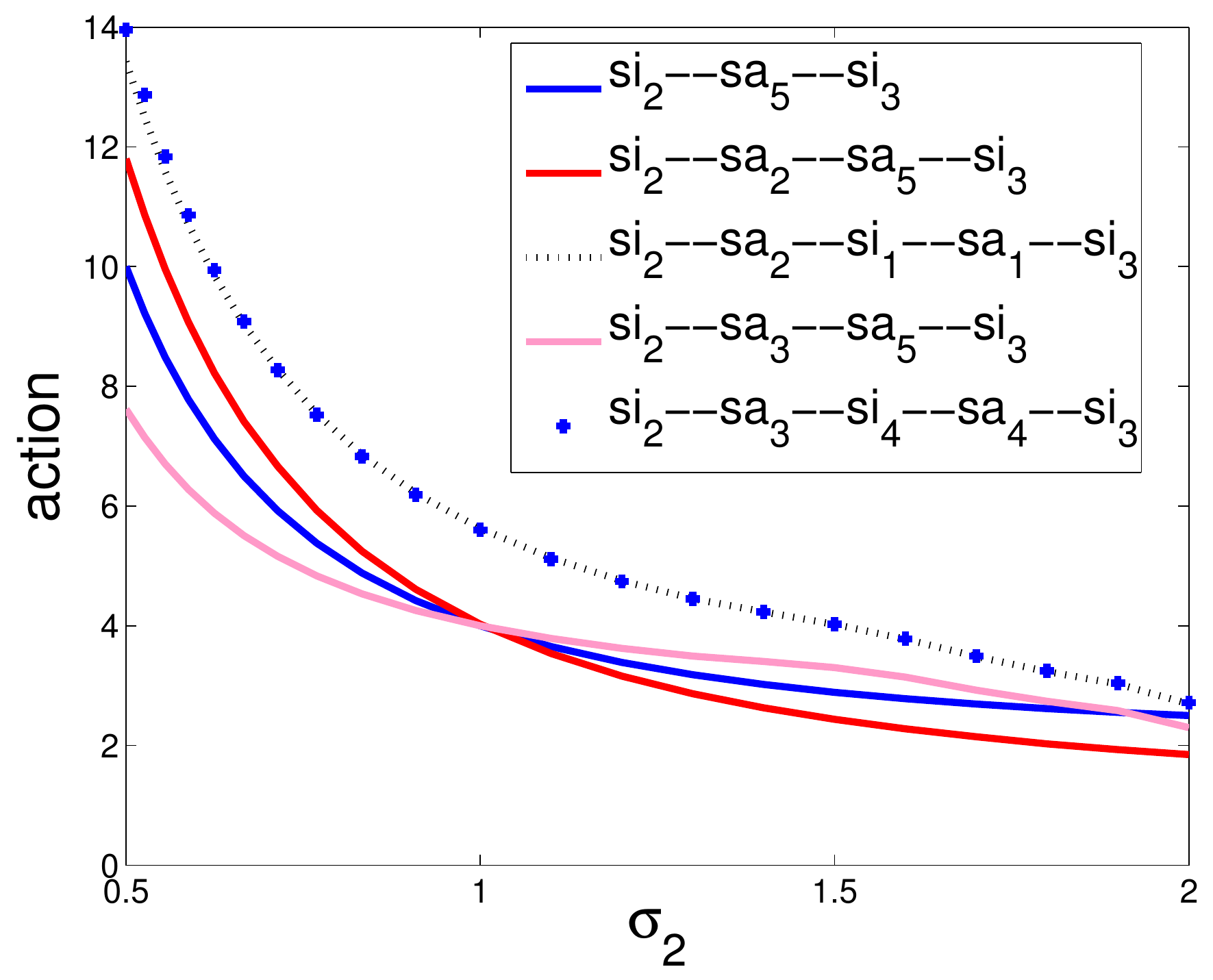}  
  \caption{Numerical values of the minimal actions corresponding to  five different initial guess paths by 
  varying   the diffusion coefficient $\sigma_2$.}
\label{fig:action}
\end{center}
\end{figure}

When $\sigma_1=\sigma_2$,  the same  global solution can be achieved from initial guesses {\bf A}, {\bf B} and {\bf D}.
 This global minimum action path is the diagonal line  ($si_2\to sa_5\to si_3$) in the  $\theta_1$-$\theta_2$ visualization (Figure \ref{fig:pathA}). 
However, when $\sigma_2 \neq \sigma_1$,   the symmetric path  ($si_2\to sa_5\to si_3$)  
is not the global minimal solution; in fact, the path for the global solution will
pass through index-1 saddle point $sa_2$ (if $\sigma_2>\sigma_1$) or $sa_3$ (if $\sigma_2<\sigma_1$).

Take $\sigma_2=1.2>\sigma_1=1$ as an example. The  transition  path corresponding to the global minimizer of the action functional 
is shown  in the right panel of  Figure  \ref{fig:pathB}.  The symmetry of the transition path is broken for this case of unequal diffusion coefficients. 
This  asymmetric path has three segments and accordingly  the transition process can be understood via three stages:  The first stage 
is from $si_2=(\vec{e}_1, -\vec{e}_1)$ to $sa_2=(\vec{e}_1, \vec{e}_2)$, 
where the first rod does not move much and only the second rod, which has the larger diffusion coefficient,  rotates 
in clockwise to the vertical position $\vec{e}_2$;  then, at the second stage which is from $sa_2$ to $sa_5=(\vec{e}_2,\vec{e}_2)$ , the second rod 
is almost still and ``waits" in the state $\vec{e}_2$ for the first rod to move from $\vec{e}_1$ to $\vec{e}_2$.
Once both rods reach the saddle state $sa_5$, the last state starts and both rods directly approach  the final state $si_3$ 
following the heteroclinic orbit in Figure \ref{fig:phasespace2} without   any aid from noise.

The above numerical results demonstrate  a selection mechanism:  the  rod with a  larger diffusion coefficient   $\sigma$ is 
subject to   large random perturbations with the same white noise realizations, and  thus it is easier to make transition movements  first.  We may call this rod as an ``active" rod.   After  this rod actively approaches a critical state ($\vec{e}_2$ here), it rests there, and 
 the interaction $U(\bx_1,\bx_2)$ starts to be the main contributor to influence the system  and the  previously still   rod (``passive" one) is attracted by
 $U$ from the active rod to the  critical state, from where the entire system has crossed all the barriers on the route of  the transition. 
What is unexpected  here is the ``sequentiality"  of the two rods' movement  during the first and the second transition stages. 
 In other words, ``sequentiality"   here means   the relative insensitivity to the other rod when  one rod is making progressive transition movement.
Taking an analogy of the so-called {\it reaction coordinate} in  chemical reactions,
we can think of $\theta_2$ as  an excellent candidate for reaction coordinate at the first  transition  stage
and   $\theta_1$  at the second stage. 
When  we varied $\sigma_2$ from $1$ to $2$ ($\sigma_1=1$ is fixed),  the numerical result shows the robustness of this set of reaction coordinates 
 especially 
at  the first stage from $si_2$ to $sa_2$.  Refer to  Figure \ref{fig:pathC} for the plots of 40  (global minimum action) paths for various values of $\sigma_2$
 by equally dividing the squared  $\sigma_2$  from $1$ to $4$.

 \begin{figure}[htbp]
 \centering
  \begin{subfigure}[b]{0.34\textwidth}
 {\includegraphics[width= \textwidth]{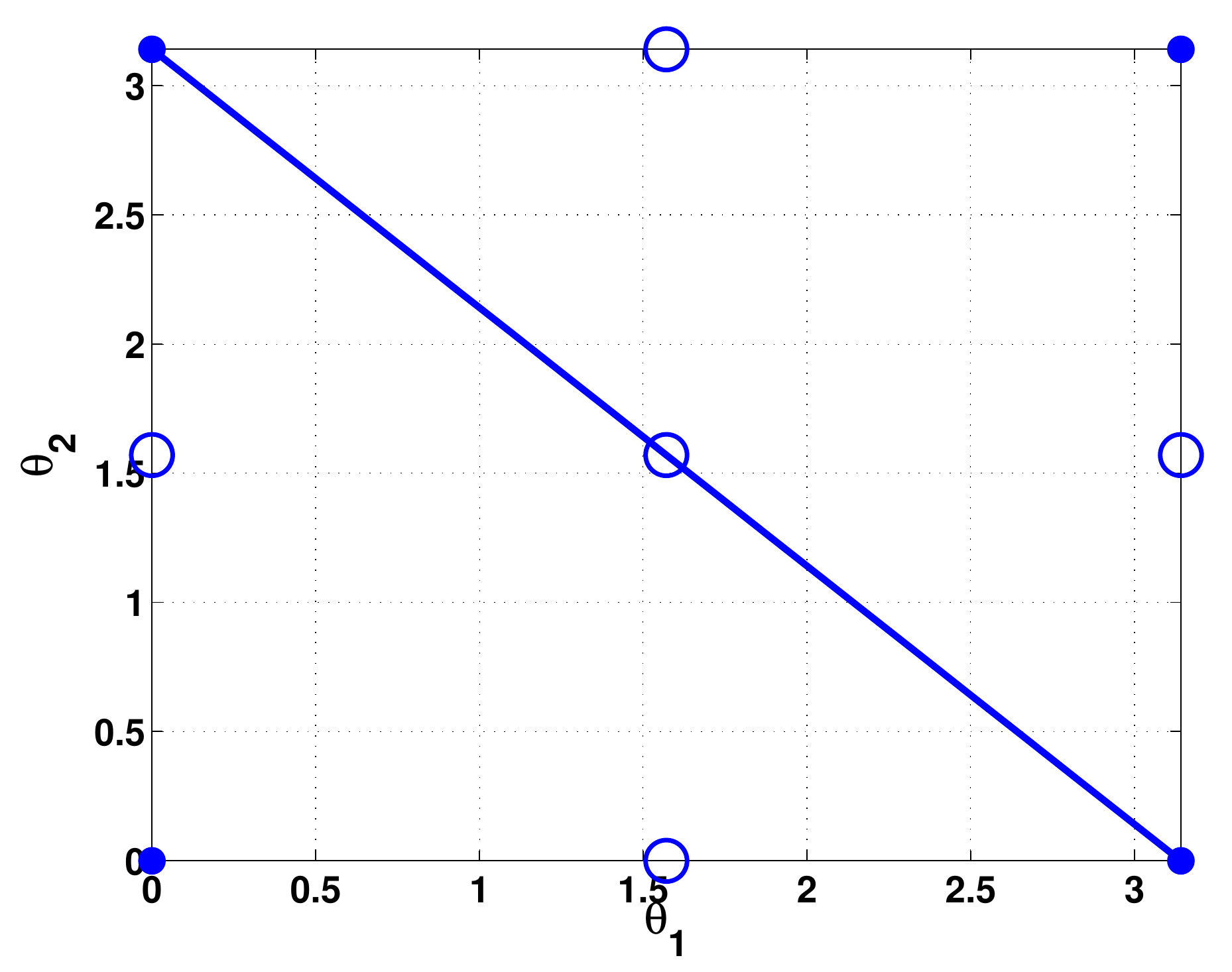}}  
              \caption{ $\sigma_2=1$}
\label{fig:pathA}        
\end{subfigure}%
 \begin{subfigure}[b]{0.34\textwidth}
 {\includegraphics[width= \textwidth]{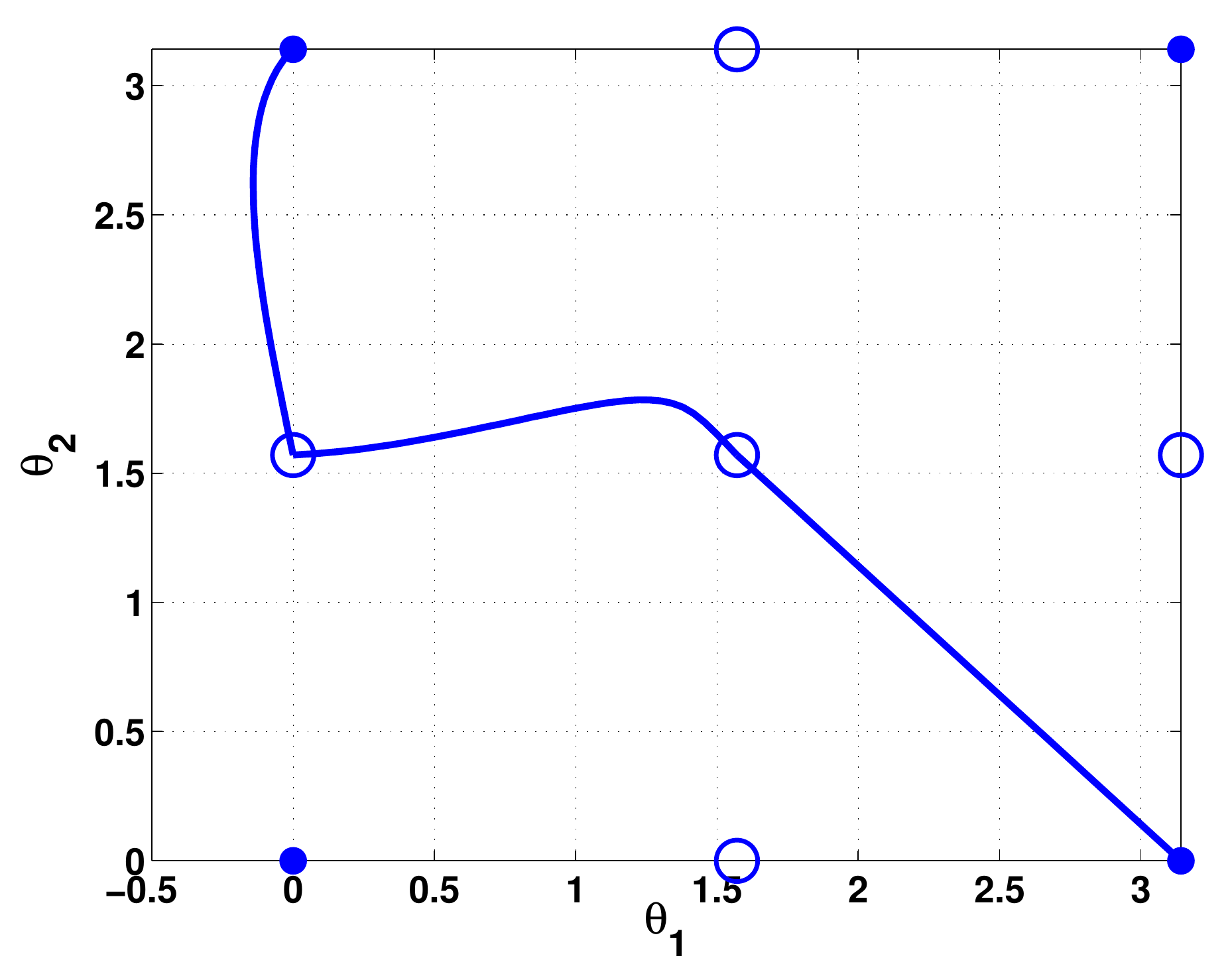}}  
              \caption{ $\sigma_2=1.2$}
\label{fig:pathB}        
\end{subfigure}%
\begin{subfigure}[b]{0.34\textwidth}
 {\includegraphics[width= \textwidth]{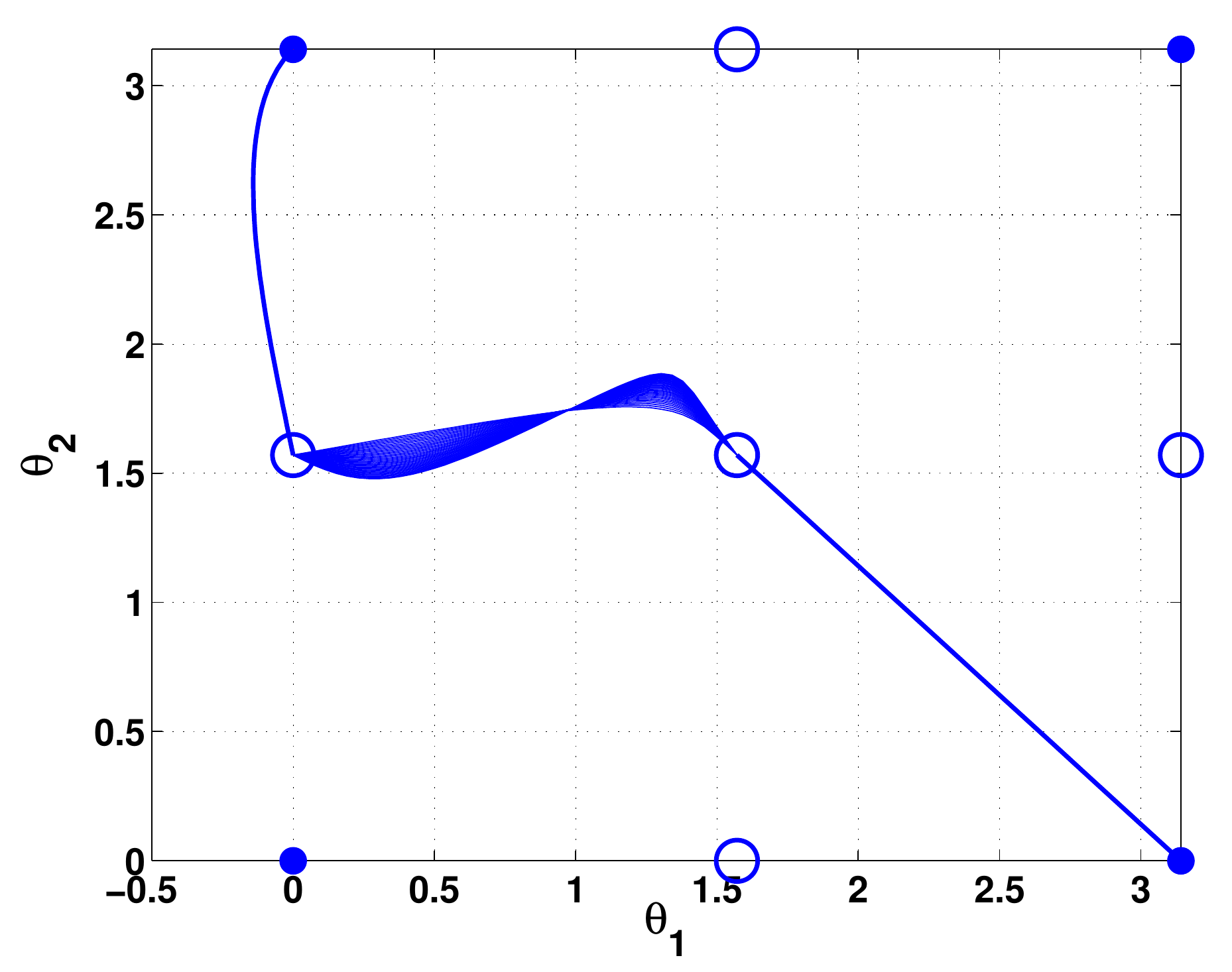}}  
              \caption{ $\sigma_2\in (1,2)$}
\label{fig:pathC}       
 \end{subfigure}%
  \caption{ Global minimum action path(s) from $si_2$ to $si_3$. }
\label{fig:path}
\end{figure}

In all, when the diffusion coefficients for the two rigid rods are identical,  the transition
path  is symmetric and both rods move simultaneously in the transitions. 
If one of the diffusion coefficients is adjusted,  then the rod molecule with larger diffusion amplitude 
will initiatively move into some intermediate state, then the other will follow in the similar fashion. 
The unbalance of the noise amplitudes  triggers an ordered process for each rod to make the transitions.

 \section{Outlook}
 
So far we have studied the most probable transition pathway for the stochastic dynamics of the type \eqref{eqn:SDE-Manifold}. In this setup, the drift and diffusion terms, in particular the projection $\Pi$, are explicitly known so that  the random motion does sit on the manifold $\mfd$. Another interesting but different setup is that the drift and diffusion terms are not known beforehand but left to be determined by the constraints. This situation is very common for problems in computational science. Let us illustrate this point with an example from the polymer science \cite{Doyle}.

\begin{figure}[htbp]
\begin{center}
\includegraphics[scale=0.36]{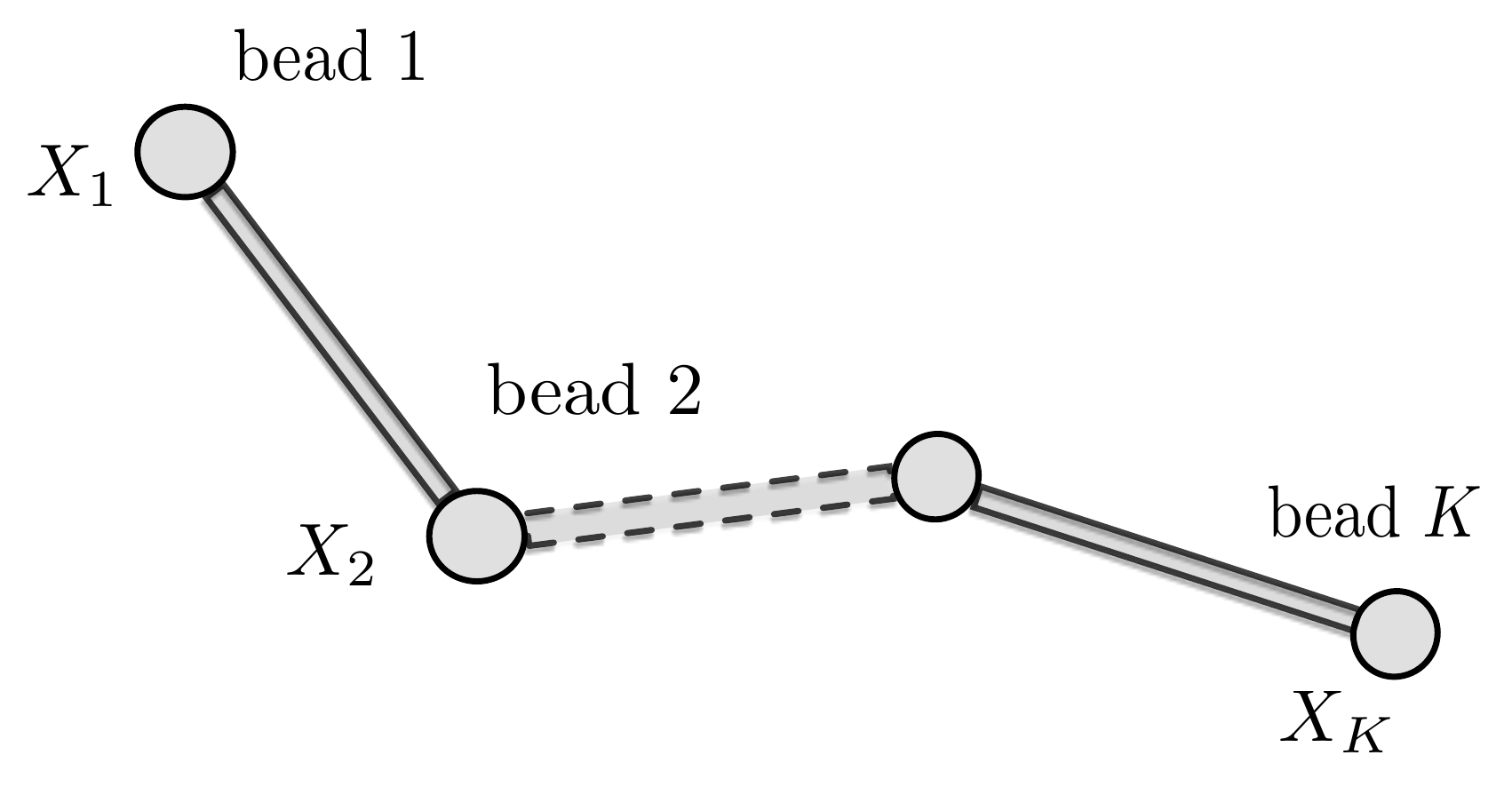}
\caption{The schematics of the bead-rod polymer chain with $K$ beads. }
\label{fig:beadrodchain}
\end{center}
\end{figure}

Consider a bead-rod polymer chain with $K$-beads (Figure \ref{fig:beadrodchain}), being described,  for instance,  by  the following  over-damped stochastic dynamics 
\begin{equation}\label{eq:bead-rod}
\d\bX_{i} = \Big(\ud(\bX_{i}) + (T_{i} \nd_{i} - T_{i-1} \nd_{i-1})\Big) \d t + \sqrt{\eps} \d\bW_{i}, \quad i=1,2,\ldots,K
\end{equation}
where $\bX_{i}$ is the coordinate of the $i$th bead, $\ud(\bX)$ is the fluid velocity at $\bX$,  
$\nd_{i} = (\bX_{i+1} - \bX_{i})/\|\bX_{i+1} - \bX_{i}\|$ for $i=1,2,\ldots,K-1$, and $T_{i}$ is the tension between the beads $i$ and $i+1$ such that the constraints
\begin{equation}\label{eq:bead-rod-constraint}
\|\bX_{i+1}-\bX_{i}\| = 1,\quad i=1,2,\ldots, K-1
\end{equation}
are satisfied. We take the convention that $T_{0}=0$ and $\nd_{K}=0$. It is obvious that the tension $\{T_{i}\}$ play the role of the Lagrange multipliers and it is only known after solving a nonlinear system. In this case, the previously considered formulation is not sufficient.

However, the transition pathway finding problem can also be formulated based on large deviation theory.  The straightforward application of the \FW theory to the system \eqref{eq:bead-rod} and \eqref{eq:bead-rod-constraint} gives the rate functional
\begin{equation}\label{eq:bead-rod-FWAction}
S_{T}[\phid] = \frac12\sum_{i=1}^{K}\int_{0}^{T}\|\dot\phid_{i}-\ud(\phid_{i}) - \bd_{i}\|^{2}\d t
\end{equation}
such that 
\begin{equation}\label{eq:bead-rod-FWActionConstraint}
\|\phid_{i+1}-\phid_{i}\| = 1,\quad i=1,2,\ldots, K-1,
\end{equation}
where $\bd_{i} = T_{i} \nd_{i} - T_{i-1} \nd_{i-1}$ and $\nd_{i} = (\phid_{i+1} - \phid_{i})/\|\phid_{i+1} - \phid_{i}\|$. Its geometric formulation can be obtained similarly as the derivations in \cite{Heymann2006}. We have 
$$
\hat{S} [\varphid] = \frac12\int_{0}^{1} \inpd{\varphid'}{ \hat{\pd}(\varphid,\varphid') }\d\alpha
$$
such that \eqref{eq:bead-rod-FWActionConstraint} is also satisfied. Here $(\hat{\pd}(\bx,\by),\lambda)$ is the unique solution of the system
$$H(\bx, \hat{\pd}) = 0,\quad H_{\pd}(\bx, \hat{\pd})=\lambda \by$$
where the Hamiltonian $H:\Real^{3K}\times \Real^{3K}\rightarrow \Real$ is defined as
$$H(\bx,\pd) =\frac12\inpd{\pd}{\pd} + \inpd{\ud(\bx)+\bd(\bx)}{\pd}.$$

Based on the obtained optimization problem with constraints or its relaxation form, we can compute the transition pathways correspondingly. We shall not develop
the study on this point here   since it is beyond the main goal of this paper. Further research on this topic will be a future study.

 \section{Summary}
  
 In this summary, we want to reiterate the mathematical  importance of  specifying  how the 
 constrained dynamical system is perturbed by noise  when one intends to  investigate the 
 transition paths in such constrained systems.  Here we considered  the SDE whose solutions
satisfy  constraints, i.e., stay on $\mfd$, for any $\eps$, rather than satisfy constraints  in the asymptotic sense.
 The asymptotic limit $\eps \downarrow 0$ is only applied in the large deviation result. 
In formulating the action functional, we take  the approach of using the local projection $\Pi$
to describe the constraints and solved the issue  of  degeneracy  brought by this projection operator in the augmented  Euclidean space $\Real^n$.  Certainly, it is possible to use the Lagrangian multiplier
to describe the constraints as mentioned in the previous section. 
 
 The constraints in the reversible case where the drift term is of gradient type and the diffusion coefficient  is isotropic constant
 actually do  not cause significant troubles in computations since the original string method still works by using the projected gradient force for  each image on string: It is essentially the same as one solves the deterministic gradient flow {\it on} the manifold.
 
 In the irreversible case, the calculation of constrained minimum action  need consider the generalized inverse of the projection operator
 unless the diffusion coefficient is  isotropic constant.   Additionally,  the resulting constrained optimization problem needs to be solved 
 by carefully choosing initial guesses to find the global solution.   
The initial guesses in our  example of  rigid rod models  are built on some prior understanding of  the phase spaces. 
In our study  that the liquid crystal molecules are  under the influence of   shear  or   possess  unequal diffusion constants,
the found global minimum action  pathways  reveal  very interesting non-equilibrium  phenomena,   and these phenomena are believed to be  generic  in irreversible systems
and deserve  further investigations.
 
 \section*{Acknowledgement}
T. Li acknowledge the support from NSFC under grants 11171009, 91130005 and the National Science Foundation for Excellent Young Scholars (Grant No. 11222114). X. Zhou acknowledges the  financial support of  CityU Start-Up Grant (7200301) and Hong Kong Early Career Schemes (109113).  
  
\bibliographystyle{amsalpha}

\bibliography{./ngre_copy_v3}

\end{document}